\newcommand{\Vint}{V_{intm}}
\newcommand{\Sint}{S_{intm}}
\newcommand{\Gtilde}{\tilde{G}}
\newcommand{\Vtilde}{V(\tilde{G})}
\newcommand{\Etilde}{E(\tilde{G})}
\newcommand{\Mtld}{\tilde{M}}
\newtheorem{proposition}{Proposition}
\newtheorem{lemma}{Lemma}
\newtheorem{theorem}{Theorem}
\title{Meeting Times of Non-atomic Random Walks}
\author{Ryota Eguchi \footnote{Nara Institute of Science and Technology, Ikoma, Japan, ry.eguchi@is.naist.jp} \hspace{8pt} Fukuhito Ooshita \footnote{Fukui University of Technology, Fukui, Japan, f-oosita@fukui-ut.ac.jp}  \hspace{8pt}Michiko Inoue \footnote{Nara Institute of Science and Technology, Ikoma, Japan, kounoe@is.naist.jp} \hspace{8pt} S\'ebastien Tixeuil \footnote{Sorbonne Universit\'e, CNRS, LIP6, Institut Universitaire de France, Paris, France \\
Sebastien.Tixeuil@lip6.fr}}
\date{}
\begin{document}

\maketitle

\begin{abstract}
    In this paper, we revisit the problem of classical \textit{meeting times} of random walks in graphs. In the process that two tokens (called agents) perform random walks on an undirected graph, the meeting times are defined as the expected times until they meet when the two agents are initially located at different vertices. A key feature of the problem is that, in each discrete time-clock (called \textit{round}) of the process, the scheduler selects only one of the two agents, and the agent performs one move of the random walk. In the adversarial setting, the scheduler utilizes the strategy that intends to \textit{maximizing} the expected time to meet.
    In the seminal papers \cite{collisions,israeli1990token,tetali1993simult}, for the random walks of two agents, the notion of \textit{atomicity} is implicitly considered. That is, each move of agents should complete while the other agent waits. In this paper, we consider and formalize the meeting time of \textit{non-atomic} random walks. In the non-atomic random walks, we assume that in each round, only one agent can move but the move does not necessarily complete in the next round. In other words, we assume that an agent can move at a round while the other agent is still moving on an edge. For the non-atomic random walks with the adversarial schedulers, we give a polynomial upper bound on the meeting times. 
\end{abstract}

\clearpage

\section{Introduction}
In the process that two tokens (called agents) perform random walks on an undirected graph, the classical \textit{meeting times} are defined as the expected times until they meet when the two agents are initially located at different vertices~\cite{bshouty1999meeting, collisions,israeli1990token,tetali1993simult,tetali1991random}. A key feature of the meeting times is that, in each discrete time-clock (called \textit{round}) of the process, the scheduler selects only one of the two agents, and the agent performs one move of the random walk\footnote{In the literature, the term "meeting time" also be referred to represent the expected time to meet in the process that the two agents move randomly in each round~\cite{hassin1999distributed,cooper2013coalescing,oliveira2019random,kanade2023on}.}. Several schedulers are considered, namely, random scheduler where the scheduler selects an agent randomly; 'angel' scheduler with the intent of \textit{minimizing} the expected time before the tokens meet; or adversarial scheduler with the intent of \textit{maximizing} the expected time\cite{tetali1993simult}. The schedulers have \textit{strategies}. According to a strategy, a scheduler chooses an agent to move in the current configuration. 
Note that, the schedulers do not know the future moves of the agent, that is, it does not have any prior information about the random bits used by the agents. 
For the adversarial scheduler, Coppersmith et al.~\cite{collisions} show an upper-bound of the meeting times of each initial configuration, and after that, Tetali and Winkler~\cite{tetali1993simult} give the exact characterization of the meeting time of each initial configuration. (The specifications are presented in Subsection \ref{subsec:related}.) The process they consider, however, assumes the \textit{atomicity} of two independent random walks. That is, while an agent moves, the other agent has to wait until the move completes. Specifically, we define two random walks are \textit{atomic}, if in each (atomic) round, (1) only one agent moves, and (2) at the next round, the moving agent completes the move and reaches the next vertex.

In this paper, we consider the \textit{non-atomic} random walks of the two agents. In the non-atomic random walks, we assume that in each time step, (1) only one agent can move but (2) the move does not necessarily complete in the next round. In other words, we assume that an agent can be chosen to move at a round while the other agent $b$ is still moving on an edge to the goal vertex. To define such behavior, we consider \textit{subdivided graph} $\tilde{G}$ of the original graph $G = (V, E)$. The graph $\tilde{G}$ is produced by subdividing each edge $e \in E$ (that is, adding another vertex in the intermediate point of each edge). We say the vertices in $V$ are \textit{original vertices}, and the added vertices \textit{intermediate vertices}. Intuitively, the situations in which an agent at some intermediate vertex in $\tilde{G}$ are interpreted as the ones in which the agent is traversing the corresponding edge in $G$. Therefore, the meetings at some intermediate vertex in $\tilde{G}$ are interpreted as the meetings in an edge in $G$. 

Suppose that the two agents are initially located at original vertices $x, y \in V$ at round $r$. The non-atomic random walks under any scheduler proceed as follows: When the agent located at $x$ is chosen to move, it determines to move to adjacent vertex $w \in N_G(x)$. Then, it moves to the intermediate vertex between $x$ and $w$ \textit{with direction} from $x$ to $w$ at the current round $r$. So the next round $r+1$, the scheduler chooses an agent from the configuration that agents are located at $y$ and intermediate vertex between $x$ and $w$  with direction from $x$ to $w$. At the round $r+1$, if the adversary chooses the agent in the intermediate vertex, then the agent reaches vertex $w$ at round $r+2$. 
Otherwise, if at round $r+1$, the agent at $y$ is chosen to move, it also reaches an intermediate vertex with the corresponding direction. 
In the executions, the scheduler repeatedly chooses one of the agents. 
For the non-atomicity, we do not impose any restrictions for reaching the goals of original vertices from the moves of intermediate vertices. For example, in an extreme case, the adversary can only choose an agent once to locate the agent to the intermediate vertex, and after that, it can repeatedly move the other agent.

When an agent is located at an intermediate vertex, then the state of the agent is interpreted as traversing the corresponding edge in the original graph $G$. Also, similar to the atomic case, the agents do not return to the original vertex they left in the random walks. Therefore, when the agent located at an intermediate vertex is chosen to move, the direction for the move is kept. In other words, when the agent is located at an intermediate vertex between $v$ and $w$ with direction from $v$ to $w$, then it must reach the vertex $w$ (not $v$). 
Note also that, if the meetings can occur only when the agents are located at the same original vertex, then any meeting does not occur in the non-atomic random walks. For example, the scheduler at first moves an agent to the intermediate vertex and repeatedly moves another agent. By doing so, the agents never meet at the original vertices. Therefore, we need to allow the agents to meet in the intermediate vertices as well as the original vertices.Allowing agents to meet on intermediate vertices hints that previous results on random walks in the atomic case may be simply expanded to the non-atomic case. However, this is not the case:
\begin{itemize}
\item Considering a random walk on the subdivided graph (illustrated in part (a) of Fig. \ref{fig:formalization}) does not take into account that in our setting, since agents may not return back to the original node they left when at an intermediate node.
\item Considering a random walk on the directed graph induced by the intermediate node direction constraints (part (b) in Fig. \ref{fig:formalization}) does not take into account that for two original neighboring vertices $u$ and $v$, the intermediate vertices $\pi_{\{u, v\}}$ and $\pi_{\{v, u\}}$ must be a single vertex (otherwise, any meeting cannot occur at an intermediate node). 
\end{itemize}

\begin{figure}[ht]
 \centering
 \includegraphics[keepaspectratio, scale=0.5]
      {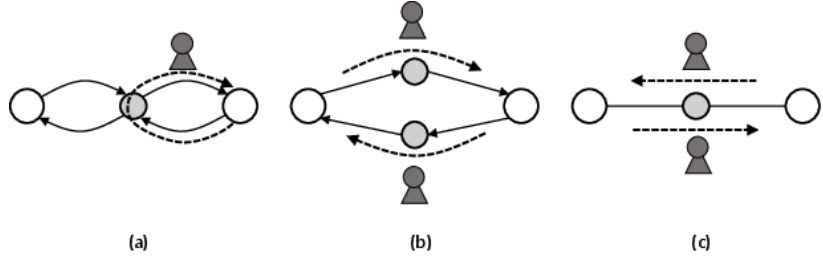}
 \caption{In this figure, the white circles represent the original nodes, and the gray circles represent the intermediate nodes. The objects with a black circle combined with a triangle represent the agents. \textbf{(a)} Random walks in the subdivided graphs. In this case, returning back to the original node they left is unavoidable. \textbf{(b)} Two intermediate nodes. In this case, the agents cannot meet when they enter the two intermediate nodes in different directions. \textbf{(c)} Our definition. The intermediate node is unique, and the agents have a direction as their state(Precise definition is presented in Section \ref{sec:preliminaries}).}
 \label{fig:formalization}
\end{figure}

\subsection{Our contribution}
The first contribution is the formalization of the meeting time of non-atomic random walks by two agents. For the formalization, we introduce the notion of non-atomic moves of agents in the subdivided graphs $\tilde{G}$. 
We then show that our definitions match the intuitions described above by formally proving the impossibility to meet when we restrict the meeting points to the original vertices in $V$.
By the impossibility, we relax the assumption such that the agents can meet at the intermediate vertices as well as the original vertices. We assume that the agents can meet when they are located at the same intermediate vertex regardless of their direction. That is, they can meet if they are located at the same intermediate vertex in the same direction or opposite directions.

Then, we prove an upper bound of the meeting times of the non-atomic random walks. Specifically, based on the proof arguments in~\cite{collisions}, we extend their proofs and show the following theorem. For the adversarial scheduler, let $\tilde{M}_{G} (x, y)$ denote the worst meeting time of non-atomic random walks from initial positions $x, y \in V$. 
Also, let $H(x, y)$ be the hitting time from $x$ to $y$, which is the expected time to reach $y$ from $x$ by an agent for $x, y \in V$ in $G$.

\begin{restatable}{theorem}{upperbound}
    \label{thrm:upperbound}
    For a pair of $x, y \in V$, suppose that the agents are initially located at vertices $x, y$ and conduct the non-atomic random walks under the strategy of the adversarial scheduler. Then, there is a pair of vertices $t, u \in V$ such that
    \begin{equation*}
        \tilde{M}_G(x, y) \le 2 \left( H(x, y) + H(y, u) - H(u, y) + d_u - 1 + \sum_{z \in N_G(u) \setminus {t}} H(z, u) \right)
    \end{equation*}
    holds, where $d_u$ is the degree of the original vertex $u \in V$.
\end{restatable}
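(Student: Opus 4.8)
The strategy is to mimic the classical argument of Coppersmith, Doyle, Raghavan and Tetali \cite{collisions} for the atomic case, adapting it to the subdivided graph $\Gtilde$ with the direction constraints. Recall that in the atomic case one fixes a target vertex $u$ and a "shelter" neighbor $t$ of $u$, and bounds the worst meeting time by the expected time for the \emph{non-sheltered} agent to be hit by the sheltered one (or, symmetrically, to reach $u$); the adversary cannot do better than keeping the two agents maximally apart, and the quantity $H(y,u) - H(u,y) + d_u - 1 + \sum_{z \in N_G(u)\setminus\{t\}} H(z,u)$ is essentially the commute-type expression that upper-bounds a "one chases, one hides" pursuit. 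The factor $2$ in our bound reflects the fact that, in the non-atomic model, every original-graph move of length $1$ takes two rounds (through the intermediate vertex), so any hitting-time estimate on $G$ translates to a round count on $\Gtilde$ with a factor at most $2$.

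Concretely, the steps I would carry out are: \textbf{(1)} Set up the potential/martingale-style argument on $\Gtilde$. Choose $u \in V$ to be (roughly) the vertex maximizing the relevant harmonic quantity, and choose $t \in N_G(u)$; the pair $(t,u)$ in the statement will come out of this choice. Define a function $\Phi$ on configurations of $\Gtilde$ that, restricted to original vertices, agrees with the Coppersmith et al. potential built from hitting times $H(\cdot,u)$, and extend it to intermediate-vertex configurations by interpolation consistent with the direction constraint (an agent on the intermediate vertex of $\{v,w\}$ heading to $w$ contributes a value between $H(v,u)$ and $H(w,u)$, shifted by the fact that it has "half-moved"). \textbf{(2)} Show that no matter which agent the adversary selects, and no matter the (random) move, $\Phi$ decreases in expectation by at least a constant (say $1/2$, absorbing the factor $2$) per round, as long as the agents have not yet met; the direction constraint only helps, since a mid-edge agent is forced toward its goal. \textbf{(3)} Bound the initial value $\Phi(x,y)$ by $H(x,y) + H(y,u) - H(u,y) + d_u - 1 + \sum_{z \in N_G(u)\setminus\{t\}} H(z,u)$, using the triangle-type and reversibility identities for hitting times on $G$ (this is where the $H(x,y)$ term enters, accounting for the arbitrary starting pair versus the "canonical" configuration centered at $u$). \textbf{(4)} Apply the optional stopping theorem / standard drift argument to conclude $\Mtld_G(x,y) \le 2\,\Phi(x,y)$, yielding the claimed inequality.

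The main obstacle I anticipate is Step (2): carefully defining $\Phi$ on the intermediate-vertex configurations so that the expected drift is controlled \emph{regardless of the adversary's choice}, including the awkward cases where one agent sits mid-edge for many rounds while the other one moves. In the atomic analysis, each round advances exactly one agent by a full edge; here the adversary can "stall" an agent on an intermediate vertex, so the potential must be insensitive to stalling (a stalled agent should not increase $\Phi$) yet still record enough progress when the other agent moves. I expect to handle this by making $\Phi$ a sum of a per-agent term (distance-like, via $H(\cdot,u)$, with intermediate vertices assigned the value of their committed endpoint, or an average, chosen so that completing a half-move never increases it) plus a coupling correction; verifying the drift then reduces to a short case analysis over the four possibilities (adversary picks the original-vertex agent / the mid-edge agent, times whether that agent is the "chaser" or the "hider"). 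A secondary technical point is ensuring the meeting condition on intermediate vertices (agents meeting regardless of direction) is compatible with the stopping time used in optional stopping — i.e., that $\Phi$ is nonnegative up to the meeting time and the drift bound holds on the complement of $\{\text{met}\}$, which should follow from the same case analysis together with the impossibility result already established in the paper (meetings must be allowed at intermediate vertices, so the stopping time is a.s. finite under the drift bound).
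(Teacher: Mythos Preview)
Your high-level plan---define a potential on configurations of $\Gtilde$, show unit expected drift against any adversarial choice, and read off the bound---matches the paper's. But the concrete potential you sketch (a per-agent sum of $H(\cdot,u)$-type terms, interpolated on half-edges, plus a coupling correction) is not what makes the argument go through, and your anticipated ``main obstacle'' in Step~(2) is exactly where the paper's mechanism differs from yours. The paper does not interpolate; it defines \emph{extended hitting times} $\tilde H(s,s')$ between \emph{states} (so the target may itself be a directed intermediate state $s_{ab}$), introduces the involution $\overline{s_{ab}}=s_{ba}$, and proves a generalised triangle identity $\tilde H(x,\overline y)+\tilde H(y,\overline z)+\tilde H(z,\overline x)=\tilde H(x,\overline z)+\tilde H(z,\overline y)+\tilde H(y,\overline x)$. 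This yields a transitive order on states and hence a \emph{hidden state}, which is shown to be intermediate, say $s_{tu}$; \emph{that} is where the pair $(t,u)$ comes from, not from maximising a harmonic quantity. The potential is $\tilde\Phi(x,y)=\tilde H(x,\overline y)+\tilde H(y,\overline{s_{tu}})-\tilde H(s_{tu},\overline y)$, and the triangle identity makes it symmetric in $x,y$. Since $\tilde H$ satisfies the one-step averaging $\tilde H(s_1,s_2)=1+\frac{1}{d_{s_1}}\sum_{z\in N_{as}(s_1)}\tilde H(z,s_2)$, one gets \emph{exact} drift $1$ for $\tilde\Phi$ in \emph{either} argument, with no chaser/hider or original/mid-edge case analysis: your stalling worry evaporates because, by symmetry, moving either agent drops $\tilde\Phi$ by exactly one.

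Two further points where your plan diverges. First, the comparison $\tilde M_G\le\tilde\Phi$ is obtained not by optional stopping but by a minimum-counterexample argument (choosing a bad configuration with minimal ``destination value''), after first establishing that an optimal deterministic strategy exists. Second, the explicit right-hand side of the theorem is obtained \emph{after} the potential bound, by expanding $\tilde\Phi(x,y)$ for $x,y\in V$ via $\tilde H(y,s_{ut})=\tilde H(y,u)+\tilde H(u,s_{ut})$ and the formula $\tilde H(u,s_{ut})=2d_u-1+\sum_{z\in N_G(u)\setminus\{t\}}\tilde H(z,u)$, together with $\tilde H=2H$ on original vertices; the factor $2$ comes from this last identity, not from a halved drift. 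Without the bar operation and the extended triangle identity, a per-agent $H(\cdot,u)$ potential will not in general be symmetric, and the adversary-independence of the drift in your Step~(2) is not clear.
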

The vertices $t, u$ in Theorem \ref{thrm:upperbound} are special in the sense that the state $s_{tu}$ is \textit{hidden}, whose precise definition appears in Subsection \ref{subsec:hidden}. Also, since the hitting times have a polynomial upper bound of $O(n^3)$, Theorem \ref{thrm:upperbound} also shows the polynomial upper bound of $O(n^4)$ on the worst meeting times of the non-atomic random walks.

\subsection{Related work}
\label{subsec:related}
The meeting times of (atomic) random walks were first presented by Israeli and Jalfon~\cite{israeli1990token}. In the paper, they introduced a token-management scheme for self-stabilizing mutual exclusion. In the initial configuration of the scheme, multiple tokens (i.e., agents) can exist, and processors (i.e., vertices) send the tokens to their adjacent processors randomly. If some tokens move to the same processor, then the tokens are merged into one token. 
For the scheme, it is required that tokens eventually are eventually merged into one token. In particular, the authors showed that, in a ring of $n$ vertices, the meeting time is $O(n^2)$, and also that, in general graphs, an exponential upper bound $O((\Delta - 1)^{D-1})$ exists, where $\Delta$ is the maximum degree of the graph, and $D$ is the diameter of the graph.

Tetali and Winkler~\cite{tetali1991random} proved the upper bound of the adversarial meeting times of random walks, specifically as follows. For any connected graph $G$, $M_G(x,y) \le H(x, y) + H(y, z) - H(z, y)$ holds for the initial positions $x, y$, where $z$ is special vertex called the \emph{hidden} vertex. In more detail, $z$ satisfies $H(z, v) \le H(v, z)$ for each $v \in V$. Hence the meeting times $M_G$ are upper bounded by at most twice the worst hitting time of the graph $G$ in the worst adversarial settings. 
 Then, Coppersmith, Tetali, and Winkler~\cite{collisions} showed another upper bound: the worst meeting time is upper bounded by $(\frac{4}{27} + o(1))n^3$ for any graphs and any initial positions. 

Tetali and Winkler~\cite{tetali1993simult}  provided the exact characterization of the meeting time of each initial configuration. 
The specification of the characterization is as follows: Let $c$ be the vertex such that, for any vertex $v \in V$, $H(v, c) \le H(c, v)$ holds. Let $q_G(w; x, y)$ denote the probability that agents initially located at $x, y$ first meet at vertex $w$ and let $Z(x)$ be another potential function that $Z(x) = H(c, x) - H(x, c)$. Then, they show 
\begin{equation*}
    M_G(x, y) = \Phi(x, y) - \sum_{w \in V} q_G(w; x, y) [Z(z) - Z(w)],
\end{equation*}
where $\Phi(x, y) = H(x, y) + H(y, z) - H(z, y)$ is introduced by Coppersmith, Tetali, and Winkle~\cite{collisions}.

The meeting times of (atomic) random walks by $k$ agents for $k \ge 2$ were examined by Bshouty et al.~\cite{bshouty1999meeting}. In the paper, they showed that the meeting times of multiple random walks have an upper bound in terms of the meeting times of fewer random walks. The meeting time of random walks by $k$ agents is the expected number of rounds to merge the agents into one agent. 
Quantities related to a random walk in graphs such as hitting time, cover time, and commute time are well studied~\cite{graham1990maximum, feige1995tight, kahn1989cover, tetali1991effective} and the interested reader can refer to the survey of Lovasz~\cite{lovasz1993random}. Interestingly, Brightwell and Winkler showed \cite{graham1990maximum} that the worst hitting time among all graphs is $O(n^3)$, which appears in the so-called lollipop graphs. The relation between random walks in graphs and electrical networks has also been investigated~\cite{tetali1991effective}.

The related problem concerning the meeting time of the non-atomic random walks is \textit{asynchronous rendezvous} of two agents~\cite{czyzowicz2012how, marco2005asynchronous,bampas201980, guilbault2011asynchronous}. 
The asynchronous rendezvous problem is similar problem to the non-atomic meeting time, in the sense that the adversary controls the speed of the agents (or the rounds to reach the destination of a move).
 The notion of subdivided graphs was introduced to study asynchronous rendezvous by Bampas et al.~\cite{bampas201980}.

\subsection{Organization of paper}
In Section \ref{sec:preliminaries}, we explain the definitions and notations for the meeting times of non-atomic random walks. In Section \ref{sec:impposibility}, we prove the impossibility that the agents cannot meet with the assumption that they are only assumed to meet in the original vertices. In Section \ref{sec:upperbound}, we show an upper bound for meeting times of non-atomic random walks. In Section \ref{sec:discussion}, we examine the upper bound in several graph classes such as lines and rings, and complete graphs. Finally, we conclude this paper in Section \ref{sec:conclusion}.

\section{Preliminaries}
\label{sec:preliminaries}

Let $G = (V(G), E(G))$ be a connected undirected graph with $n$ vertices and $m$ edges. Let $\Gtilde$ denote the graph produced by subdividing each edge of $G$ into two parts. Precisely, we define $\Gtilde = (\Vtilde , \Etilde)$. The vertex set is defined by $\Vtilde = V(G) \cup \Vint$, where $\Vint = \{ \pi_{\{ v, w \} } \mid (v,w) \in E(G) \}$. The edge set is defined by $\Etilde = \{ (v, \pi_{\{v, w\}}) \mid v \in V(G) \wedge \pi_{\{v, w\}} \in \Vint \}$. Note that $\pi_{\{v, w\}} = \pi_{\{w, v\}}$ holds. Also, we say vertices in $V(G)$ are \textit{original vertices}, and vertices in $\Vint$ are \textit{intermediate vertices}. 

Each state of an agent is an element from a set $V \cup \Sint$, where $\Sint = \{ s_{xy} \mid \pi_{\{x, y\}} \in \Vint \}$. The state $s_{xy}$ represents that the agent is located at $\pi_{\{x, y\}}$ and it has the direction from $x$ to $y$. That is, in the original graph $G$, the agent with state $s_{xy}$ is heading to the vertex $y$ from $x$ in the edge $(x, y)$ in $G$. Note that for each $\pi_{\{v, w\}}$, there are two states $s_{xy}$ and $s_{yx}$, and $s_{xy} \neq s_{yx}$ holds since the directions are different. The states from $V$ represent that the current position of the agent is an original vertex of $G$. We call the states from $V$ \textit{original states}. We define states from $\Sint$ are \textit{intermediate states}. For each state $s$, we define the following bar operation: if $s \in V$, then $\overline{s} = s$, and if $s = s_{xy} \in \Sint$, then $\overline{s_{xy}} = s_{yx}$.
Next, we define the adjacent states of each state. 
The set of adjacent states $N_{as}(s)$ for a state $s \in V \cup \Sint$ is defined that (1) $N_{as}(s) = \{ s_{vw} \in \Sint \mid w \in N_G(v) \}$ if $s = v$ (original), and (2) $N_{as}(s) = \{w\}$ if $s = s_{vw} \in \Sint$ (intermediate).
We also define $d_s = |N_{as}(s)|$.
The set of configurations $\mathcal{C}$ is defined by $(V \cup \Sint)^2$. The elements of a configuration correspond to the states of the two agents. 

The computation proceeds in discrete time $r = 0,1,2, \dots$, which is called \textit{rounds}. In each round $r$, the adversary determines which agent moves for a configuration $c$ in the round. The selected agent at state $s$ moves to an adjacent state $s' \in N_{as}(s)$ in $\Gtilde$ with probability $1/d_s$. Note that if the state is intermediate state $s_{vw}$, then the next state is uniquely determined to $w$. Note that even if we consider the non-atomic moves of the agents, these moves of the agents are assumed to be complete in the current round, and in the next round the moving agent should be at the next state in $V \cup \Sint$. 

The adversary chooses an agent in the current configuration according to a strategy $S_{\tilde{G}}$ for a subdivided graph $\tilde{G}$. 
The strategy $S_{\tilde{G}}$ is a function $S_{\tilde{G}}: \mathcal{C} \rightarrow [0,1]$, where $\mathcal{C}$ is a set of configurations.
The expression $S_{\tilde{G}}(c) = p$ for $c = (s, s') \in (V \cup \Sint)^2$ and $p \in [0,1]$ represents that the adversary moves an agent in the state $s$ with probability $p$ (otherwise it moves the other agent in the state $s'$) in the current configuration $c$. 

We say that the agents meet, if the execution reaches a configuration $c = (s, s')$ such that $s = s'$ or $s = \overline{s'}$. In other words, the agents meet if they are located at the same original vertices or intermediate vertices regardless of the direction of the agents. We define $\tilde{M}_S (x, y)$ as the expected rounds for non-atomic random walks starting at initial position $x, y$ to meet when the adversary adopts the strategy $S$. $\Mtld_G(x, y)$ denotes the worst meeting time of non-atomic random walks starting at $x, y \in V$ for all strategies in $\tilde{G}$. 

\section{Impossibility Results}
\label{sec:impposibility}
In this section, using our definition we show the impossibility that the agents cannot meet when we restrict the meeting points to original vertices. 
\begin{theorem}
    If the meeting cannot occur on intermediate vertices, there exists an adversarial strategy such that agents never meet at an original vertex.
\end{theorem}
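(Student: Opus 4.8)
The plan is to exhibit an explicit (memoryless) adversarial strategy that parks one agent on an intermediate vertex for the entire execution, so that the two agents are never simultaneously located at a common original vertex; since by hypothesis a meeting on an intermediate vertex is forbidden, no meeting ever occurs. To set up, say agent $1$ starts at $x$ and agent $2$ starts at $y$ with $x \neq y$ (the initial positions are distinct), both being original states. Since $G$ is connected on at least two vertices, $d_x \ge 1$, so $N_{as}(x) \ne \emptyset$ and agent $1$ can in fact perform a move out of $x$.

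The strategy $S$ I would use is defined by a single rule that depends only on the first coordinate of the configuration: set $S\big((x, s')\big) = 1$ for every $s' \in V \cup \Sint$ (when agent $1$ sits on $x$, move agent $1$), and $S\big((s, s')\big) = 0$ for every configuration with $s \neq x$ (move agent $2$). The key observation is that agent $1$ is selected in exactly one round, namely round $0$: the initial configuration is $(x,y)$, so $S$ moves agent $1$, which then enters some intermediate state $s_{xw}$ with $w \in N_G(x)$, i.e., it occupies the intermediate vertex $\pi_{\{x,w\}}$. From that point on the first coordinate equals $s_{xw} \neq x$ as long as agent $1$ has not moved, so $S$ perpetually selects agent $2$ and agent $1$ never moves again. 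Made precise, an easy induction on $r \ge 1$ shows that the round-$r$ configuration has the form $(s_{xw}, \cdot)$ with $s_{xw}$ fixed, so for every $r \ge 1$ agent $1$ is on the intermediate vertex $\pi_{\{x,w\}}$ and hence never on an original vertex. A meeting at an original vertex requires both agents to occupy the same vertex of $V$; this fails at round $0$ because $x \neq y$, and fails at every round $r \ge 1$ because agent $1$'s state $s_{xw}$ is intermediate. Under the hypothesis that meetings cannot occur on intermediate vertices, the agents therefore never meet under $S$.

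The only point that needs a word of care — and the mild obstacle in the argument — is that agent $2$'s random walk may wander back onto $x$, or onto agent $1$'s intermediate vertex, producing configurations such as $(s_{xw}, x)$, $(s_{xw}, s_{xw})$, or $(s_{xw}, s_{xw'})$. None of these breaks the argument: the rule defining $S$ reacts only to whether the first coordinate equals $x$, which it no longer does after round $0$, so no case analysis on agent $2$'s position is required; and configurations of the form $(s_{xw}, s_{xw})$ are not meetings in the restricted model since they sit on an intermediate vertex. This is exactly the informal scenario described in the introduction — move one agent once onto an intermediate vertex and then move the other agent forever — now verified to be realizable by a legitimate (memoryless) strategy.
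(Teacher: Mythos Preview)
Your proof is correct but uses a different strategy than the paper's. You take the ``park'' approach: move agent~$1$ once to an intermediate state $s_{xw}$ and then activate only agent~$2$ forever, so agent~$1$ is never again on an original vertex. The paper instead uses a ``guard'' strategy: whenever the configuration is $(v, s_{uv})$ with $v \in V$ and $u \in N_G(v)$ (one agent on an original vertex, the other one step away and heading toward it), the adversary moves the agent at $v$; since any meeting at an original vertex must pass through such a configuration on the preceding round, the meeting is always pre-empted.

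Your argument is simpler and directly formalizes the informal scenario sketched in the introduction; the trade-off is that your strategy is tailored to the specific initial pair $(x,y)$ (the rule refers to the particular vertex $x$). The paper's strategy is a single uniform rule that defeats every non-meeting starting configuration simultaneously, and also illustrates that even a highly ``fair'' adversary---one that keeps both agents moving indefinitely---can still prevent a meeting at an original vertex. Both arguments establish the theorem as stated.
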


\begin{proof}
    We specify the strategy as follows. For each $v \in V$ and each $s_{uv} \in \Sint$ for $u \in N_G(v)$, the strategy moves the agent at $v$ with probability $1$ in the configuration $(v, s_{uv})$, and moves agents arbitrary in other configurations.
    Obviously, to meet at an original vertex, the agents should reach the configuration $(v, s_{uv})$ for some $v$ and $s_{uv}$. However, in the next configuration, the agents cannot meet by the strategy, since by the assumption of meeting, the agents at $s_{vu}$ and $s_{uv}$ cannot meet. 

\end{proof}

Observe that weakening the power of the scheduler does not help. Even if the scheduler is $2$-fair (in such a strategy, each agent is selected infinitely often, and between any two selections of an agent, any other agent is selected at most twice), it remains impossible to obtain meeting if they can only occur on original vertices. For example, consider an alternating strategy (the scheduler alternates between two consecutive activations of the two agents, except for the first activations in the strategy, where the first time an agent is activated, it is activated only once). Then, after the first activation, the first agent is on an intermediate node. After the second agent is activated, both agents are on intermediate nodes. Then, the first agent is activated twice, and both agents remain on intermediate nodes. The selection continues so that both agents are on intermediate nodes at the end of each activation.

\section{An upper bound for non-atomic meeting time}
\label{sec:upperbound}
In this section, we show the upper bound for non-atomic meeting time. At first, we define the hitting times between states in $\Gtilde$, which is a generalization of the hitting time between vertices in $G$. In the following, we say the generalized version of the hitting times \textit{extended hitting times}. We also show a property of the extended hitting time, which we call triangle property in the following. It is also the generalized version of the triangle property shown in the arguments in \cite{collisions}. In our argument, the property is generalized in a bit tricky way to hold the latter arguments. Using the generalized triangle property, we can prove the existence of special states called \textit{hidden states} (Subsection \ref{subsec:hidden}). The hidden states allow us to introduce a potential function $\tilde{\Phi}$ for a pair of states (Subsection \ref{subsec:main}), and finally we show an upper bound on the non-atomic meeting times using the potential function $\tilde{\Phi}$.

\subsection{Hitting time of states and triangle property}
For a pair of states $s, s'$, we define that the extended hitting time $\tilde{H}(s, s')$ is the expected moves to reach the state $s'$ from the state $s$ by an agent in $\tilde{G}$. The moves of the agent are the same as the ones in the non-atomic moves in Section \ref{sec:preliminaries}, specifically as follows: if its state $s$ is in $V$, then it moves to $s_{sw} \in N_{as}(s)$ with probability $1/d_s$ for each $w \in N_G(v)$; If its state $s$ is $s_{xy} \in \Sint$, then it moves to $y$ as a move. 

If $s, s' \in V$ holds, then $\tilde{H}(s, s')$ is twice of the value of the original hitting time $H(s, s')$ in $G$, that is, $\tilde{H}(s, s') = 2H(s, s')$. This is because, in the moves in $\tilde{G}$, the agent should move twice to traverse an edge corresponding to $G$. If the starting point is intermediate, that is, $s = s_{xy} \in \Sint$ and $s' \in V$, then we have $\tilde{H}(s_{xy}, s') = 1 + \tilde{H}(y, s') = 1 + 2H(y, s')$. The extended hitting time in this case can be also calculated by the original hitting time. The remaining case is that the goal state is an intermediate state, that is, $s' = s_{xy}$ for $s_{xy} \in \Sint$. In this case, to reach $s_{xy}$ the agent should visit the original vertex $x$. Therefore, the following equality holds by the linearity of expectation: $\tilde{H}(s, s_{xy}) = \tilde{H}(s, x) + \tilde{H}(x, s_{xy})$. Therefore, we should calculate the value of $\tilde{H}(x, s_{xy})$ for any $x$ and $s_{xy} \in N_{as}(x)$. 

\begin{lemma}
    \label{lma:neighborhit}
    For each $x \in V$ and $s_{xy} \in \Sint$, we have
    \begin{align*}
        \tilde{H}(x, s_{xy}) = 2 d_x - 1 + \sum_{z \in N_G(x) \setminus \{y\}} \tilde{H}(z, x).
    \end{align*}
\end{lemma}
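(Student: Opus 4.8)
The plan is to set up a first-step (one-move) recurrence for $\tilde H(x, s_{xy})$ and solve it by expressing everything in terms of the quantities $\tilde H(z, x)$ for $z \in N_G(x)$. From the original vertex $x$, one move of the agent in $\tilde G$ goes to the state $s_{xw}$ for a uniformly random neighbor $w \in N_G(x)$, each with probability $1/d_x$. If the chosen neighbor is $w = y$, then after this single move the agent is already at the target state $s_{xy}$, contributing $\frac{1}{d_x}\cdot 1$. Otherwise (probability $\frac{d_x-1}{d_x}$, choosing some $w \neq y$) the agent is at $s_{xw}$, and from $s_{xw}$ the next forced move takes it to $w$; from $w$ it then has to travel back to $x$ and only then try again, costing $1 + 1 + \tilde H(w, x)$ rounds before being in the same situation as the start. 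This gives the recurrence
\begin{equation*}
\tilde H(x, s_{xy}) = 1 + \frac{1}{d_x}\sum_{w \in N_G(x)\setminus\{y\}}\bigl(1 + \tilde H(w, x) + \tilde H(x, s_{xy})\bigr).
\end{equation*}
Here I have used $\tilde H(s_{xw}, x) = 1 + \tilde H(w, x)$, which is the intermediate-to-original formula recorded in the paragraph preceding the lemma, together with the linearity/strong Markov decomposition $\tilde H(w, s_{xy}) = \tilde H(w, x) + \tilde H(x, s_{xy})$ valid because any walk reaching $s_{xy}$ must pass through $x$.

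The rest is routine algebra: writing $k = |N_G(x)\setminus\{y\}| = d_x - 1$, the recurrence reads $\tilde H(x,s_{xy}) = 1 + \frac{1}{d_x}\bigl(k + \sum_{w\neq y}\tilde H(w,x) + k\,\tilde H(x,s_{xy})\bigr)$. Collecting the $\tilde H(x,s_{xy})$ terms gives $\frac{d_x - k}{d_x}\tilde H(x,s_{xy}) = \frac{1}{d_x}\tilde H(x,s_{xy})\cdot 1$ on the left after rearrangement, i.e. multiply through by $d_x$ and move $k\,\tilde H(x,s_{xy})$ over to obtain $(d_x - k)\,\tilde H(x, s_{xy}) = d_x + k + \sum_{w \neq y}\tilde H(w, x)$. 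Since $d_x - k = 1$, this yields $\tilde H(x, s_{xy}) = d_x + (d_x - 1) + \sum_{z \in N_G(x)\setminus\{y\}}\tilde H(z, x) = 2d_x - 1 + \sum_{z \in N_G(x)\setminus\{y\}}\tilde H(z, x)$, which is exactly the claimed identity.

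The only genuinely delicate point — and the step I would be most careful about — is justifying the decomposition $\tilde H(w, s_{xy}) = \tilde H(w, x) + \tilde H(x, s_{xy})$ and, relatedly, that from $s_{xw}$ with $w \neq y$ the agent is truly ``reset'' to needing $\tilde H(w,x) + \tilde H(x,s_{xy})$ more moves. This requires observing that the state $s_{xy}$ can only be entered via the original vertex $x$ (the unique in-neighbor of $s_{xy}$ among original states is $x$, by definition of $N_{as}$ and the direction constraint), so every trajectory from $w$ to $s_{xy}$ hits $x$ first; the strong Markov property at the first visit to $x$ then gives the additive split. One should also double check that the ``+1'' bookkeeping is consistent: the move $x \to s_{xw}$ is one round, the forced move $s_{xw}\to w$ is one more round, and these are exactly the two unit costs appearing inside the sum before $\tilde H(w,x)$. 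No case is problematic beyond this; connectivity of $G$ guarantees all the hitting times involved are finite.
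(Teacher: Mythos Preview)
Your proof is correct and follows essentially the same approach as the paper: both set up the first-step recurrence at $x$ by conditioning on the chosen neighbor, use that after landing at $w\neq y$ the agent must return to $x$ (via the decomposition $\tilde H(w,s_{xy})=\tilde H(w,x)+\tilde H(x,s_{xy})$), and then solve the resulting linear equation for $\tilde H(x,s_{xy})$. Your recurrence $\tilde H(x,s_{xy}) = 1 + \frac{1}{d_x}\sum_{w\neq y}(1+\tilde H(w,x)+\tilde H(x,s_{xy}))$ is algebraically identical to the paper's $T=\frac{1}{d_x}\bigl(1+\sum_{z\neq y}(2+\tilde H(z,x)+T)\bigr)$, just with the first-move cost written outside the parenthesis.
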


\begin{proof}
    Let $\tilde{H}(x, s_{xy}) = T$. At vertex $x$, the agent reaches the state $s_{xy}$ with the probability $1/d_x$. Otherwise, it moves $z$ for $z \in N_G(x) \setminus \{y\}$ with the same probability with two moves. After the latter case, the agent should return to the vertex $x$ for reaching the state $s_{xy}$. Therefore, the value can be written the following recursive formula:
    \begin{equation*}
        T = \frac{1}{d_x} \left( 1 + \sum_{z \in N_G(x) \setminus \{ y \}} (2 + \tilde{H}(z, x) + T) \right). 
    \end{equation*}
    Therefore we have
    \begin{eqnarray*}
        d_x \cdot T &=& 1 + 2d_x -2 + \left( \sum_{z \in N_G(x) \setminus \{ y \}} \tilde{H}(z, x) \right) + (d_x -1) \cdot T \\
        T &=&  2d_x - 1 + \sum_{z \in N_G(x) \setminus \{ y \}} \tilde{H}(z, x).
    \end{eqnarray*}

\end{proof}

Next, we prove the key property of the extended hitting times. Here we introduce the original triangle property of the original hitting times.

\begin{lemma}[From \cite{collisions}]
    \label{lma:originaltriangle}
    For any $x, y, z \in V$, we have \\
    $
        H(x,y) + H(y,z) + H(z, x) = H(x, z) + H(z, y) + H(y, x)
    $
\end{lemma}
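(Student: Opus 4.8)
The statement is the classical ``triangle'' (or ``cycle'') identity for hitting times. The plan is to exhibit a potential function $g\colon V \to \mathbb{R}$ with
$$
H(a,b) - H(b,a) = g(b) - g(a) \qquad \text{for all } a,b \in V,
$$
since then the claim is immediate: the difference of the two sides of the lemma equals
$$
\bigl[H(x,y)-H(y,x)\bigr] + \bigl[H(y,z)-H(z,y)\bigr] + \bigl[H(z,x)-H(x,z)\bigr]
= \bigl(g(y)-g(x)\bigr) + \bigl(g(z)-g(y)\bigr) + \bigl(g(x)-g(z)\bigr) = 0 .
$$

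To build $g$ I would use the fundamental matrix of the walk. Let $\pi$ be the stationary distribution ($\pi_v = d_v/(2m)$), let $P$ be the transition matrix, and set $Z = \sum_{t\ge 0}(P^t - \Pi)$, where $\Pi$ is the rank-one matrix with every row equal to $\pi$; the series converges for the walk on a connected graph, after passing if necessary to a lazy version (which only rescales every hitting time by a fixed constant and hence leaves the identity intact). Two standard facts suffice: (i) the hitting-time formula $H(i,j) = (Z_{jj}-Z_{ij})/\pi_j$, obtained by first-step analysis; and (ii) the reversibility relation $\pi_i Z_{ij} = \pi_j Z_{ji}$, which follows by summing the detailed-balance identities $\pi_i P^t_{ij} = \pi_j P^t_{ji}$ (valid for all $t\ge 0$ by induction) over $t$. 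Combining (i) and (ii),
$$
H(i,j) - H(j,i) = \frac{Z_{jj}-Z_{ij}}{\pi_j} - \frac{Z_{ii}-Z_{ji}}{\pi_i} = \frac{Z_{jj}}{\pi_j} - \frac{Z_{ii}}{\pi_i},
$$
the cross terms cancelling because (ii) gives $Z_{ij}/\pi_j = Z_{ji}/\pi_i$. Thus $g(v) := Z_{vv}/\pi_v$ does the job.

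The main obstacle is purely expository: one must either invoke the fundamental-matrix identities (i)--(ii) as known, or reprove them, which is a short but not wholly mechanical computation. I would also note the more self-contained alternative and why it is no shorter: starting the walk at $x$, letting $\sigma$ be the first hitting time of $\{y,z\}$ and $p = \Pr_x[\tau_y < \tau_z]$, strong-Markov reasoning yields $H(x,y) = \mathbb{E}_x[\sigma] + (1-p)H(z,y)$ and $H(x,z) = \mathbb{E}_x[\sigma] + p\,H(y,z)$, hence $H(x,y)-H(x,z) = (1-p)H(z,y) - p\,H(y,z)$; writing the two analogous identities with base points $y$ and $z$ and summing, the left sides telescope to exactly the quantity we want to vanish, but closing this still requires the reversible escape-probability cycle identity among the three first-passage probabilities, which is comparable in difficulty. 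For a clean write-up I would therefore present the potential-function proof.
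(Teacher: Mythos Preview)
The paper does not prove this lemma; it is imported from \cite{collisions} and used as a black box. The only thing the paper adds is the sentence right after the statement, interpreting each side as the expected length of a cyclic tour ($x\to y\to z\to x$ versus $x\to z\to y\to x$); that interpretation is the germ of the time-reversal proof in \cite{collisions}: for a reversible chain the stationary process read backwards has the same law, so the two tours have equal expected length.

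Your potential-function argument is correct and is the other standard route to the same identity. The computation $H(i,j)-H(j,i)=Z_{jj}/\pi_j-Z_{ii}/\pi_i$ is exactly what the fundamental matrix gives, and the telescoping then is immediate. Two small points you could tighten. First, the lazy-walk remark: laziness with parameter $1/2$ multiplies every hitting time by exactly $2$ (couple the lazy walk to the original by inserting an independent Geometric$(1/2)$ holding time before each real step), so the identity transfers verbatim; ``rescales by a fixed constant'' undersells how clean this is. Second, you can avoid the convergence issue altogether by taking $Z=(I-P+\Pi)^{-1}$, which exists for any irreducible finite chain (bipartite or not) and satisfies the same two facts (i) and (ii) you use. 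With either fix your proof is complete and self-contained, whereas the paper simply cites the result.
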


Note that the left side of the equation is the expected time that a random walk starting at $x$ visits $y$ then visits $z$, and returns to $x$, similarly to the right side of the equation.
While there may exist multiple generalizations of the property,
we use the following generalization of the triangle property to establish the latter argument of our proof.

\begin{lemma}
    \label{lma:extendedtriangle}
    For states $x, y, z \in V \cup \Sint$, we have \\
    $
        \tilde{H}(x, \overline{y}) + \tilde{H}(y, \overline{z}) + \tilde{H}(z, \overline{x}) = \tilde{H}(x, \overline{z}) + \tilde{H}(z, \overline{y}) + \tilde{H}(y, \overline{x})
    $
\end{lemma}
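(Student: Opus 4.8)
The plan is to reduce Lemma \ref{lma:extendedtriangle} to the original triangle property of Lemma \ref{lma:originaltriangle} by expressing each extended hitting time $\tilde H(a,\overline b)$ in terms of ordinary hitting times in $G$, using the explicit formulas already collected just before the statement. Concretely, every state $a \in V\cup\Sint$ can be written uniformly: if $a\in V$ then $\overline a = a$, and if $a = s_{pq}$ then $\overline a = s_{qp}$. So the three terms on each side all have the shape $\tilde H(\text{state},\text{state})$, and the earlier discussion gives: $\tilde H(u,v)=2H(u,v)$ for $u,v\in V$; $\tilde H(s_{pq},v)=1+2H(q,v)$ for $v\in V$; $\tilde H(u,s_{pq}) = \tilde H(u,p)+\tilde H(p,s_{pq})$ with $\tilde H(p,s_{pq})$ the quantity computed in Lemma \ref{lma:neighborhit}; and $\tilde H(s_{ab},s_{pq}) = \tilde H(s_{ab},p)+\tilde H(p,s_{pq}) = 1 + 2H(b,p) + \tilde H(p,s_{pq})$. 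The key observation is that whatever the constant $\tilde H(p,s_{pq})$ equals, it depends only on the \emph{target} intermediate state $s_{pq}$, hence on $\overline{\,\cdot\,}$ of the original argument, so it will appear symmetrically.

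First I would do a clean case analysis on how many of $x,y,z$ are intermediate states (up to the cyclic/reflection symmetry of the identity there are essentially four cases: zero, one, two, or three intermediate). In each case I substitute the formulas above into both sides. I expect a recurring pattern: each side of the claimed identity becomes
\[
2\bigl(H(x',y') + H(y',z') + H(z',x')\bigr) \;+\; (\text{a sum of constants}),
\]
versus
\[
2\bigl(H(x',z') + H(z',y') + H(y',x')\bigr) \;+\; (\text{the same sum of constants}),
\]
where $x',y',z'$ are the original vertices ``pointed at'' by $x,y,z$ — namely $x'=x$ if $x\in V$ and $x'=q$ if $x=s_{pq}$ — with a subtlety I discuss below. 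The $+1$'s coming from the single step out of an intermediate \emph{source} state, and the Lemma \ref{lma:neighborhit} constants coming from an intermediate \emph{target} state, each appear once on each side (a $+1$ for each intermediate among $\{x,y,z\}$ as a source in the left product and again in the right product; a copy of $\tilde H(p,s_{pq})$ for each intermediate target). Once the constants cancel, the $2\cdot(\text{LHS of Lemma \ref{lma:originaltriangle}}) = 2\cdot(\text{RHS of Lemma \ref{lma:originaltriangle}})$ equality is exactly twice the original triangle property, which is Lemma \ref{lma:originaltriangle}.

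The main obstacle I anticipate is the ``direction'' bookkeeping: when the source is an intermediate state $s_{pq}$, the walk first steps to $q$, so $\tilde H(s_{pq},\cdot)$ involves $H(q,\cdot)$; but when the target is an intermediate state $s_{pq}$, the walk must pass through $p$, so $\tilde H(\cdot,s_{pq})$ involves $H(\cdot,p)$. In the identity the term $\tilde H(x,\overline y)$ pairs the source $x$ with the target $\overline y$; since $\overline{s_{pq}} = s_{qp}$, the target $\overline y$ points us to the \emph{first} index of $y$. So if $y = s_{pq}$ then as a source it contributes $H(q,\cdot)$ and as a target (via $\overline y = s_{qp}$) it forces passage through $q$ as well — both reference $q$, the ``far'' endpoint of $y$'s direction. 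This consistency is precisely why the bar operation is placed on the second argument, and it is what makes the reduction go through; I would state this alignment explicitly as the crux of the proof. After verifying it, each of the four cases collapses to Lemma \ref{lma:originaltriangle} after cancelling identical constants, completing the argument. I would present one representative case (say, two intermediate states) in full and remark that the others are analogous.
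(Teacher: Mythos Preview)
Your proposal is correct and follows essentially the same approach as the paper: decompose each $\tilde H(a,\overline b)$ so that the target-dependent constants (your Lemma~\ref{lma:neighborhit} terms and the $+1$'s) cancel between the two sides, then invoke the original triangle property (Lemma~\ref{lma:originaltriangle}) on the ``far endpoint'' vertices. The paper's write-up is slicker only in that it introduces two maps $f,g: V\cup\Sint\to V$ with $f(s_{ab})=a$, $g(s_{ab})=b$ (and the identity on $V$), uses $\tilde H(s,s')=\tilde H(s,f(s'))+\tilde H(f(s'),s')$ and $\tilde H(s,s')=\tilde H(s,g(s))+\tilde H(g(s),s')$ uniformly, and then observes $g(x)=f(\overline x)$ --- exactly your ``both reference $q$'' alignment --- so all four of your cases collapse into a single line with no case analysis needed.
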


\begin{proof}
    To prove the lemma, we use the function $f: V \cup \Sint \rightarrow V$, where $f(v) = v$ for $v \in V$ and $f(s_{ab}) = a$ for $s_{ab} \in \Sint$. Using the function, we can rewrite the extended hitting time as follows: For states $s, s' \in V \cup \Sint$, $\tilde{H}(s, s') = \tilde{H}(s, f(s')) + \tilde{H}(f(s'), s')$. 
    Using the function, we rewrite the left side of the equation in the lemma as
    \begin{eqnarray*}
        \tilde{H}(x, \overline{y}) + &\tilde{H}&(y, \overline{z}) + \tilde{H}(z, \overline{x}) \\
         &=& \tilde{H}(x, f(\overline{y})) + \tilde{H}(f(\overline{y}), \overline{y}) + \tilde{H}(y, f(\overline{z})) + \tilde{H}(f(\overline{z}), \overline{z}) \\
         && \, + \tilde{H}(z, f(\overline{x})) + \tilde{H}(f(\overline{x}), \overline{x}) \\
         &=& \tilde{H}(x, f(\overline{y})) + \tilde{H}(y, f(\overline{z})) + \tilde{H}(z, f(\overline{x})) \\
         && \, + \tilde{H}(f(\overline{x}), \overline{x}) + \tilde{H}(f(\overline{y}), \overline{y}) + \tilde{H}(f(\overline{z}), \overline{z}).
    \end{eqnarray*}

    Similarly, for the right side of the equation, we have
    \begin{eqnarray*}
        \tilde{H}(x, \overline{z}) + \tilde{H}(z, \overline{y}) + \tilde{H}(y, \overline{x}) &=& \tilde{H}(x, f(\overline{z})) + \tilde{H}(z, f(\overline{y})) + \tilde{H}(y, f(\overline{x})) \\
        && \, + \tilde{H}(f(\overline{x}), \overline{x}) + \tilde{H}(f(\overline{y}), \overline{y}) + \tilde{H}(f(\overline{z}), \overline{z}).
    \end{eqnarray*}

    Therefore it is sufficient to show that $\tilde{H}(x, f(\overline{y})) + \tilde{H}(y, f(\overline{z})) + \tilde{H}(z, f(\overline{x})) = \tilde{H}(x, f(\overline{z})) + \tilde{H}(z, f(\overline{y})) + \tilde{H}(y, f(\overline{x}))$. We then define another function $g: V \cup \Sint \rightarrow V$, which returns $v$ if the input is $v \in V$, and $b$ if it is $s_{ab} \in \Sint$. Similarly for $f$, we have $\tilde{H}(s, s') = \tilde{H}(s, g(s)) + \tilde{H}(g(s), s')$. Thus this proof is reduced to if the following equation holds: $\tilde{H}(g(x), f(\overline{y})) + \tilde{H}(g(y), f(\overline{z})) + \tilde{H}(g(z), f(\overline{x})) = \tilde{H}(g(x), f(\overline{z})) + \tilde{H}(g(z), f(\overline{y})) + \tilde{H}(g(y), f(\overline{x}))$. For the functions, it holds that $g(x) = f(\overline{x})$  and $g(x)$ is an original state. Therefore, the last equation holds by the original triangle property of Lemma \ref{lma:originaltriangle}.
    
\end{proof}

\subsection{Hidden states}
\label{subsec:hidden}

We define the following relation $\le_{EHT}$ as follows: For any pair of states $s, s'$, $s \le_{EHT} s'$ holds if $\tilde{H}(s, \overline{s'}) \le \tilde{H}(s', \overline{s})$ holds. 
It is also the extension of the relation $\le_{HT}$ given in \cite{collisions}. In the proofs of \cite{collisions} the relation for the original hitting times is defined as follows: For any vertex $v, w \in V$, the relation $v \le_{HT} w$ holds if $\tilde{H}(v, w) \le \tilde{H}(w, v)$ holds. 
The main purpose of the relations is to prove the existence of the \textit{hidden} vertex (or state). The hidden vertex in the original argument is the minimum vertex in the relation. 
It is proven by showing the relation is transitive. We also prove the extended relation $\le_{EHT}$ is transitive, and prove the existence of the hidden state.

\begin{restatable}{lemma}{transitive}
    \label{lma:transitive}
    The relation $\le_{EHT}$ is transitive. As a consequence, there is a state $s \in V \cup \Sint$ such that for any $s' \in V \cup \Sint$, it holds that $s \le_{EHT} s'$.
\end{restatable}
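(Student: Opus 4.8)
The plan is to mirror the classical argument from \cite{collisions} that establishes the existence of a hidden vertex via transitivity of $\le_{HT}$, but carried out over the extended state space $V \cup \Sint$ with the bar operation threaded through. First I would observe that transitivity immediately yields the existence of a minimum element: since $V \cup \Sint$ is finite and $\le_{EHT}$ is a total preorder once we know it is transitive (totality is automatic, because for any $s, s'$ either $\tilde{H}(s,\overline{s'}) \le \tilde{H}(s',\overline{s})$ or the reverse holds), we can pick any $\le_{EHT}$-minimal element $s$; minimality plus totality gives $s \le_{EHT} s'$ for all $s'$. So the real content is transitivity.

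To prove transitivity, suppose $s \le_{EHT} s'$ and $s' \le_{EHT} s''$, i.e.\ $\tilde{H}(s,\overline{s'}) \le \tilde{H}(s',\overline{s})$ and $\tilde{H}(s',\overline{s''}) \le \tilde{H}(s'',\overline{s'})$; I want $\tilde{H}(s,\overline{s''}) \le \tilde{H}(s'',\overline{s})$. Adding the two hypotheses gives
\begin{equation*}
    \tilde{H}(s,\overline{s'}) + \tilde{H}(s',\overline{s''}) \le \tilde{H}(s',\overline{s}) + \tilde{H}(s'',\overline{s'}).
\end{equation*}
Now I would apply Lemma \ref{lma:extendedtriangle} with the triple $(x,y,z) = (s, s', s'')$, which reads $\tilde{H}(s,\overline{s'}) + \tilde{H}(s',\overline{s''}) + \tilde{H}(s'',\overline{s}) = \tilde{H}(s,\overline{s''}) + \tilde{H}(s'',\overline{s'}) + \tilde{H}(s',\overline{s})$. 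Rearranging this identity lets me substitute: the right-hand side $\tilde{H}(s',\overline{s}) + \tilde{H}(s'',\overline{s'})$ of the inequality equals $\tilde{H}(s,\overline{s'}) + \tilde{H}(s',\overline{s''}) + \tilde{H}(s'',\overline{s}) - \tilde{H}(s,\overline{s''})$, and after cancelling the common terms $\tilde{H}(s,\overline{s'}) + \tilde{H}(s',\overline{s''})$ from both sides, the inequality collapses exactly to $0 \le \tilde{H}(s'',\overline{s}) - \tilde{H}(s,\overline{s''})$, i.e.\ $\tilde{H}(s,\overline{s''}) \le \tilde{H}(s'',\overline{s})$, which is $s \le_{EHT} s''$.

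The main obstacle is essentially bookkeeping rather than conceptual: one must be careful that Lemma \ref{lma:extendedtriangle} is stated with precisely the bar pattern $\tilde{H}(x,\overline{y}) + \tilde{H}(y,\overline{z}) + \tilde{H}(z,\overline{x})$ that matches the way the three hypotheses/goal pair up, and that the definition of $\le_{EHT}$ uses the same bar convention — so that no stray $\overline{\cdot}$ is needed and the cancellation is clean. I would double-check that $\overline{\overline{s}} = s$ (immediate from the definition of the bar operation, since it swaps $s_{xy} \leftrightarrow s_{yx}$ and fixes $V$) is not secretly needed anywhere, and that the argument did not implicitly assume $s, s', s''$ are distinct (it does not — the identity of Lemma \ref{lma:extendedtriangle} holds for arbitrary triples, including with repeats). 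With transitivity in hand, the existence of the hidden state follows as described above, and I would state it explicitly as: let $s^{\star}$ minimize, say, $\tilde{H}(s, \overline{s})$... actually no such tie-break is needed — any minimal element of the finite total preorder works, so I would simply invoke finiteness and totality.
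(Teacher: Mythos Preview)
Your proposal is correct and follows essentially the same route as the paper: both add the two hypothesis inequalities and combine them with the extended triangle identity of Lemma~\ref{lma:extendedtriangle} to obtain $\tilde{H}(s,\overline{s''}) \le \tilde{H}(s'',\overline{s})$. Your additional remarks about totality plus finiteness yielding a minimal element spell out what the paper leaves implicit in ``As a consequence,'' but the core argument is identical.
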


\begin{proof}
    It is sufficient to show that for any states $x, y, z \in V \cup \Sint$, if $x \le_{EHT} y$ and $y \le_{EHT} z$ holds, then we have $x \le_{EHT} z$. By the assumption, we have $\tilde{H}(x, \overline{y}) \le \tilde{H}(y, \overline{x})$ and $\tilde{H}(y, \overline{z}) \le \tilde{H}(z, \overline{y})$. Applying the triangle property for the states $x, y, z$ of Lemma \ref{lma:extendedtriangle}, we have
    
    \begin{eqnarray*}
        \tilde{H}(x, \overline{y}) + \tilde{H}(y, \overline{z}) + \tilde{H}(z, \overline{x}) &=& \tilde{H}(x, \overline{z}) + \tilde{H}(z, \overline{y}) + \tilde{H}(y, \overline{x}) \\
        \tilde{H}(x, \overline{y}) - \tilde{H}(y, \overline{x})  +  \tilde{H}(y, \overline{z}) - \tilde{H}(z, \overline{y}) &=& \tilde{H}(x, \overline{z}) - \tilde{H}(z, \overline{x})\\
        \tilde{H}(x, \overline{z}) - \tilde{H}(z, \overline{x}) &\le& 0
    \end{eqnarray*}
    Therefore it holds that $\tilde{H}(x, \overline{z}) \le \tilde{H}(z, \overline{x})$, proving the lemma. 
\end{proof}

Then, we show that the hidden state(s) is intermediate. For any original state $v \in V$, we show that $s_{wv} \le_{EHT} v$ for any intermediate $s_{wv} \in \Sint$ for $w \in N_G(v)$. For any vertex $v$ and any $s_{wv}$, we have $\tilde{H}(s_{wv}, \overline{v}) = \tilde{H}(s_{wv}, v) = 1$ by definition, and $\tilde{H}(v, \overline{s}_{wv}) = \tilde{H}(v, s_{vw}) = 2d_v - 1 + \sum_{z \in N_{as}(v) \setminus \{w\}} \tilde{H}(z, v)$. Since the graph $\tilde{G}$ is connected, we have $d_v \ge 1$. Hence it holds that $\tilde{H}(v, s_{vw}) \ge 1$. Therefore for $v, s_{wv}$, we have $s_{wv} \le_{EHT} v$, as desired.

\begin{proposition}
    An intermediate state is hidden.
\end{proposition}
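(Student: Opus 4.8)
The plan is to combine the two facts already established just before the proposition. By Lemma~\ref{lma:transitive}, the relation $\le_{EHT}$ is transitive, and since $V\cup\Sint$ is finite there exists at least one minimum element, i.e.\ a hidden state $s$ with $s \le_{EHT} s'$ for every $s' \in V\cup\Sint$. The goal is to show that such a minimum element cannot be an original state, hence must be an intermediate state.

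First I would argue by contradiction: suppose some original state $v \in V$ is hidden. Since $G$ is connected, $v$ has at least one neighbour $w \in N_G(v)$, so the intermediate state $s_{wv} \in \Sint$ exists. The displayed computation immediately preceding the proposition shows $s_{wv} \le_{EHT} v$, using $\tilde{H}(s_{wv},\overline{v}) = \tilde{H}(s_{wv},v) = 1$ (intermediate-to-original move costs exactly one step), together with Lemma~\ref{lma:neighborhit}, which gives $\tilde{H}(v,\overline{s_{wv}}) = \tilde{H}(v,s_{vw}) = 2d_v - 1 + \sum_{z \in N_G(v)\setminus\{w\}} \tilde{H}(z,v) \ge 1 = \tilde{H}(s_{wv},\overline{v})$. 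So $s_{wv} \le_{EHT} v$ always holds.

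Next I would use the fact that $v$ is hidden to get the reverse relation $v \le_{EHT} s_{wv}$, which means $\tilde{H}(v,\overline{s_{wv}}) \le \tilde{H}(s_{wv},\overline{v})$, i.e.\ $\tilde{H}(v,s_{vw}) \le 1$. Comparing with the exact value from Lemma~\ref{lma:neighborhit}, this forces $2d_v - 1 + \sum_{z\in N_G(v)\setminus\{w\}}\tilde{H}(z,v) = 1$, hence $d_v = 1$ and the sum is empty; in particular $v$ has a unique neighbour $w$. But then I need one more step to derive an outright contradiction: take a third state, e.g.\ the neighbour $w$ itself (or, if $w$ has only $v$ as a neighbour so that $G$ is a single edge, any intermediate state), and show that $v \le_{EHT} w$ fails. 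Concretely, $\tilde{H}(v,\overline{w}) = \tilde{H}(v,w) = 2H(v,w) \ge 2$ while $\tilde{H}(w,\overline{v}) = \tilde{H}(w,v) = 2H(w,v)$; since $v$ is hidden, $2H(v,w) \le 2H(w,v)$, so $v \le_{HT} w$ in the original graph, and combined with $d_v = 1$ this is exactly the situation that makes a degree-one vertex the hidden vertex of $G$ — yet one checks directly via the extended triangle property or the original results of \cite{collisions} that the hidden vertex of a graph with a pendant edge is never that pendant vertex unless $G$ is that single edge, in which case the hidden state is still intermediate. Cleanly, it suffices to observe that $s_{wv} \le_{EHT} v$ is \emph{strict} whenever $d_v \ge 2$ (the inequality $\tilde{H}(v,s_{vw}) \ge 2d_v - 1 \ge 3 > 1$ is strict), so a hidden original state must have degree one, and then a second application of strictness at the unique neighbour closes the argument.

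The main obstacle is the boundary case $d_v = 1$: the simple inequality $s_{wv}\le_{EHT} v$ only yields $\le$, not $<$, precisely when $v$ is a leaf, so the one-line argument does not by itself exclude a pendant vertex from being hidden. Handling this requires either (i) showing the inequality is still strict because $\tilde H(v,s_{vw}) = 2\cdot 1 - 1 = 1$ but some \emph{other} intermediate state $s'$ satisfies $s' \le_{EHT} v$ strictly — which is false in general — or, more robustly, (ii) noting that if $v$ is a leaf with neighbour $w$, then $s_{wv} \le_{EHT} v$ and $v \le_{EHT} s_{wv}$ both hold, so $v$ and $s_{wv}$ are \emph{tied} under $\le_{EHT}$, and therefore $s_{wv}$ is \emph{also} a minimum element, i.e.\ also hidden. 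Thus even in the exceptional case an intermediate state is hidden, which is all the proposition claims (it does not assert uniqueness). I would write the proof around this observation: either every hidden state is intermediate because the leaf-case ties force the corresponding intermediate state to be hidden too, or — in the generic case $d_v \ge 2$ — no original state can be hidden at all since $s_{wv} <_{EHT} v$ strictly.
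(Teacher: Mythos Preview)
Your proposal is correct in its final form, and the key computation---$\tilde H(s_{wv},\overline v)=1$ versus $\tilde H(v,\overline{s_{wv}})=2d_v-1+\sum_{z}\tilde H(z,v)\ge 1$, hence $s_{wv}\le_{EHT} v$---is exactly what the paper does in the paragraph immediately preceding the proposition. So the approaches coincide at the essential step.

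Where you diverge from the paper is in the surrounding structure. You initially aim for a contradiction, trying to show that \emph{no} original state can be hidden. This is strictly stronger than what the proposition asserts and, as you yourself discover, it fails at leaves: when $d_v=1$ the inequality is an equality and a pendant vertex may well be hidden. The detour through ``a second application of strictness at the unique neighbour'' is vague and does not close that case; your hand-wave toward the results of \cite{collisions} about pendant vertices is not a proof. Only in your option~(ii) do you land on the clean argument: if an original $v$ is hidden, then from $s_{wv}\le_{EHT} v$ and transitivity (Lemma~\ref{lma:transitive}) we get $s_{wv}\le_{EHT} v\le_{EHT} s'$ for every $s'$, so $s_{wv}$ is hidden too. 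Note that this reasoning works for \emph{every} $d_v\ge 1$, not just leaves, so the entire case split on $d_v$ and the attempt at strict inequality are unnecessary. The paper's proof is simply: establish $s_{wv}\le_{EHT} v$ for arbitrary original $v$, invoke the existence of a minimum from Lemma~\ref{lma:transitive}, and conclude via transitivity that an intermediate minimum exists. You should rewrite your argument to go there directly, dropping the contradiction framing and the degree dichotomy.
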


\subsection{Main argument}
\label{subsec:main}

Now we define a potential function $\tilde{\Phi}$. For each pair of states $x, y$ and a hidden state $s_{tu}$, we set the function $\tilde{\Phi}(x, y) = \tilde{H}(x, \overline{y}) + \tilde{H}(y, \overline{s_{tu}}) - \tilde{H}(s_{tu}, \overline{y})$. The function is derived using the triangle property for states $s, s', s_{tu}$ as follows:
\begin{eqnarray*}
    \tilde{H}(x, \overline{y}) + \tilde{H}(y, \overline{s_{tu}}) + \tilde{H}(s_{tu}, \overline{x}) &=& \tilde{H}(x, \overline{s_{tu}}) + \tilde{H}(s_{tu}, \overline{y}) + \tilde{H}(y, \overline{x}) \\
    \tilde{H}(x, \overline{y}) + \tilde{H}(y, \overline{s_{tu}}) - \tilde{H}(s_{tu}, \overline{y}) &=& \tilde{H}(y, \overline{x}) + \tilde{H}(x, \overline{s_{tu}}) - \tilde{H}(s_{tu}, \overline{x})
\end{eqnarray*}

Therefore it holds that $\tilde{\Phi}(x, y) = \tilde{\Phi}(y, x)$. Also, since the states $s_{tu}$ is hidden, we have that $\tilde{H}(z, \overline{s_{tu}}) - \tilde{H}(s_{tu}, \overline{z}) \ge 0$ for any state $z \in V \cup \Sint$. Thus we have the following proposition, which is implicitly used in Theorem \ref{thrm:potential}. 

\begin{proposition}
    \label{fact:phigezero}
    $\tilde{\Phi}(x, y) \ge 0$ for any $x, y \in V \cup \Sint$
\end{proposition}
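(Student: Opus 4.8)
The plan is to decompose $\tilde{\Phi}(x,y)$ into two pieces, each of which is nonnegative on its own. Recall
\[
\tilde{\Phi}(x, y) = \underbrace{\tilde{H}(x, \overline{y})}_{(\mathrm{I})} + \underbrace{\left( \tilde{H}(y, \overline{s_{tu}}) - \tilde{H}(s_{tu}, \overline{y}) \right)}_{(\mathrm{II})},
\]
where $s_{tu}$ is a hidden state fixed once and for all. For $(\mathrm{I})$, I would argue that $\tilde{H}(s, s')$ is an expected number of moves, hence nonnegative, and moreover finite: since $G$ is connected, from any state in $V \cup \Sint$ the non-atomic walk reaches some original vertex, and from any original vertex it reaches any state, so the walk is irreducible on the finite state set $V \cup \Sint$ and all extended hitting times are finite. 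In particular $\tilde{H}(x, \overline{y}) \ge 0$.

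For $(\mathrm{II})$, I would invoke the defining property of the hidden state established in Lemma \ref{lma:transitive}: there is a state $s$ with $s \le_{EHT} s'$ for every $s' \in V \cup \Sint$, and by the discussion preceding this proposition such a state may be taken to be an intermediate state $s_{tu}$. By the definition of $\le_{EHT}$, the relation $s_{tu} \le_{EHT} z$ means precisely $\tilde{H}(s_{tu}, \overline{z}) \le \tilde{H}(z, \overline{s_{tu}})$. Instantiating $z := y$ gives $\tilde{H}(y, \overline{s_{tu}}) - \tilde{H}(s_{tu}, \overline{y}) \ge 0$, i.e.\ $(\mathrm{II}) \ge 0$. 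Adding the two nonnegative quantities $(\mathrm{I})$ and $(\mathrm{II})$ yields $\tilde{\Phi}(x,y) \ge 0$, as claimed.

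I do not anticipate a genuine obstacle: the heavy lifting (the extended triangle property of Lemma \ref{lma:extendedtriangle}, transitivity of $\le_{EHT}$, and the existence of a hidden intermediate state) has already been carried out, and this proposition is the routine bookkeeping consequence, mirroring the nonnegativity of the potential $\Phi$ in the atomic setting of \cite{collisions}. The only point worth stating explicitly is the finiteness of the extended hitting times, so that the manipulation ``$a + (b - c)$ with $a \ge 0$ and $b - c \ge 0$'' is unambiguous; this is handled by the connectivity remark above. Note also that the choice of $x$ plays no role beyond nonnegativity of $\tilde{H}(x,\overline{y})$, so the bound is uniform in $x,y$.
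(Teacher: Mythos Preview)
Your proposal is correct and follows essentially the same approach as the paper: the text immediately preceding the proposition already notes that, since $s_{tu}$ is hidden, $\tilde{H}(z,\overline{s_{tu}})-\tilde{H}(s_{tu},\overline{z})\ge 0$ for every state $z$, and combined with the trivial nonnegativity of $\tilde{H}(x,\overline{y})$ this yields $\tilde{\Phi}(x,y)\ge 0$. Your added remark on finiteness of extended hitting times via connectivity is a harmless bit of extra care not spelled out in the paper.
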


Now we add some modifications for initial positions, define some notations, and present the preliminary propositions/lemmas for the main proof. At first, we extend the initial positions of the non-atomic meeting times, to the ones that include states from $V \cup \Sint$. That is, we define that the starting states of the agents include the intermediate states.  
In the following, we assume that any strategy $S$ contains the intermediate states as initial positions.
Next, we define the optimal (longest) and deterministic strategy. A strategy is deterministic, if for any pair $s, s' \in V \cup \Sint$, the moving agent is chosen with probability $1$. Also, an strategy $S$ is optimal, if for any pair $s, s' \in V \cup \Sint$, it holds that $\Mtld_G(s, s') = \Mtld_S(s, s')$. 

We also define a value of configurations called \textit{destination value}. For a configuration $(s_1, s_2)$, the value is defined 
according to the function $g$ defined in the proof of Lemma \ref{lma:extendedtriangle}. Recall that $g(s) = s$ if $s = v \in V$ and $g(s_{ab}) = b$ for $s_{ab} \in \Sint$. The destination value of the 
configuration $(s_1, s_2)$ is defined as $d(s_1, s_2) = dist(s_1, g(s_1)) 
+ dist(g(s_1), g(s_2)) + dist(s_2, g(s_2))$, where (1) $dist(s, g(s)) = 1$ if the state $s$ is intermediate, and (2) $dist(s, g(s)) = 0$ if $s \in V$, and (3) for $s, s' \in V$, $dist(s, s')$ is the hop-distance of the vertices in $\tilde{G}$. Observe that $d(s_1, s_2) \ge d(g(s_1), g(s_2))$ holds.

To prove the existence of the optimal and deterministic strategy, we first show the following lemma. Let $S(x,y)$ be the strategy that maximizes the non-atomic meeting time starting at $x,y \in V \cup \Sint$, i.e., $\Mtld_{S(x,y)}(x,y) = \Mtld_G(x,y)$ holds. Its proof is deferred to Appendix \ref{apd:meeteq}.

\begin{restatable}{lemma}{meeteq}
    \label{lma:meeteq}
    For any $x, y \in V$, suppose that $S(x, y)$ moves the agent with the state $x$ with positive probability $p > 0$. Then, we have that 
    \begin{equation*}
        \label{eqn:eq}
    \Mtld_{S(x,y)}(x,y) = 1 + \frac{1}{d_x} \sum_{z \in N_{as}(x)}\Mtld_G(z, y).
    \end{equation*}
\end{restatable}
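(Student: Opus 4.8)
The plan is to condition on the first move of the agent that starts at $x$ and use the fact that the adversary, once it has committed to selecting the $x$-agent at the configuration $(x,y)$, faces a fresh worst-case problem from the resulting configuration. First I would argue that, by the definition of the strategy $S(x,y)$ together with the optimality of $\Mtld_G$, after the $x$-agent moves (with the stated positive probability $p$), the configuration becomes $(z,y)$ for $z\in N_{as}(x)$, each arising with probability $1/d_x$, and from that round on the adversary can do no better than play its own optimal strategy, so the continuation contributes exactly $\Mtld_G(z,y)$ in expectation. This yields the identity
\begin{equation*}
\Mtld_{S(x,y)}(x,y) \;=\; 1 + \frac{1}{d_x}\sum_{z\in N_{as}(x)} \Mtld_G(z,y),
\end{equation*}
modulo the subtlety discussed below, since $x\in V$ is an original state so the chosen agent's single move in round one lands it at the intermediate state $s_{xz}\in N_{as}(x)$ — but here $N_{as}(x)$ is precisely the set of those intermediate states, so the sum is over $z$ ranging over $N_{as}(x)$ exactly as written.

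The heart of the argument is the following ``no gain from mixing'' observation: if $S(x,y)$ selects the $x$-agent with probability $p\in(0,1)$ (a genuine mixture), then with the remaining probability $1-p$ it selects the $y$-agent, sending the configuration to some $(x,s_{yw})$. I would show that replacing $S(x,y)$ by the deterministic rule ``at $(x,y)$ always select the $x$-agent, then follow an optimal strategy'' does not decrease the meeting time, hence the value is achieved by the pure-$x$ choice and equals the displayed right-hand side; conversely the displayed quantity is at most $\Mtld_G(x,y)=\Mtld_{S(x,y)}(x,y)$ by definition of the worst-case meeting time (the right-hand side is the meeting time of the specific strategy that picks $x$ and then plays optimally). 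Combining the two inequalities gives equality. To make the first inequality rigorous I would write $\Mtld_{S(x,y)}(x,y)$ as a convex combination
\begin{equation*}
\Mtld_{S(x,y)}(x,y) \;=\; p\!\left(1+\tfrac{1}{d_x}\!\!\sum_{z\in N_{as}(x)}\!\!\Mtld_{G}(z,y)\right) + (1-p)\,\Bigl(1+\Mtld_{S'}(x,s_{yw})\Bigr),
\end{equation*}
for an appropriate continuation strategy $S'$, then note that each term is at most $\Mtld_G(x,y)$ (the first because it is the value of an explicit strategy; the second because $\Mtld_{S'}(x,s_{yw})\le \Mtld_G(x,s_{yw})$ and one checks this is bounded by $\Mtld_G(x,y)-1$ using that meeting from $(x,s_{yw})$ can only be longer by going back through $(x,y)$-type configurations — this is where I expect to lean on a monotonicity-under-delay property), so a strict mixture cannot exceed the pure-$x$ value, forcing the claimed equality.

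The main obstacle will be justifying that the adversary ``restarts optimally'' — i.e.\ that the continuation value from $(z,y)$ under the tail of $S(x,y)$ equals $\Mtld_G(z,y)$ rather than being merely bounded by it. The clean way around this is to observe that we do not actually need the tail of $S(x,y)$ to be optimal from $(z,y)$: we only need the inequality $\Mtld_{S(x,y)}(x,y)\le 1+\frac{1}{d_x}\sum_{z}\Mtld_G(z,y)$ in one direction (from optimality of the right-hand side as a particular strategy's value) and $\ge$ in the other (because the tail of an optimal $S(x,y)$ can be no worse than optimal play from each $(z,y)$ would be — but that is backwards). The correct resolution, which I would adopt, is to define $S(x,y)$ on the nose so that it selects the $x$-agent at $(x,y)$ with probability $1$ and thereafter coincides with an optimal strategy for each reached pair $(z,y)$; one then verifies this strategy is itself optimal for $(x,y)$ by the convexity argument of the previous paragraph (no other choice at $(x,y)$ does better), and the hypothesis ``$S(x,y)$ moves the $x$-agent with positive probability'' is used only to rule out the degenerate case and to invoke the mixture decomposition. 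Standard first-step analysis for expected hitting/absorption times then gives the stated recursion directly.
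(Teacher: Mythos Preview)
Your overall two-inequality structure matches the paper's, and your argument that $\Mtld_{S(x,y)}(x,y) \ge 1 + \frac{1}{d_x}\sum_{z\in N_{as}(x)}\Mtld_G(z,y)$ (the right-hand side being the value of the explicit strategy ``move $x$ first, then continue optimally'') is the same as the paper's. The gap is entirely in the reverse inequality.

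Your proposed route to $\le$ does not go through. The claim you lean on, $\Mtld_G(x,s_{yw}) \le \Mtld_G(x,y)-1$ via a ``monotonicity-under-delay property,'' is false in general: moving the $y$-agent can \emph{increase} the worst-case meeting time (that is precisely why the adversary sometimes prefers to move $y$), so there is no such monotonicity to invoke. Your fallback, redefining $S(x,y)$ to deterministically select $x$ and then ``follow an optimal strategy from each $(z,y)$,'' presupposes a single strategy that is simultaneously optimal from every configuration; that is exactly the content of Lemma~\ref{lma:optimal}, which in the paper is proved \emph{using} the present lemma, so invoking it here is circular. (Minor: your convex-combination display also drops the average over $N_{as}(y)$ on the $y$-branch.)

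The paper closes the $\le$ direction by a short contradiction instead. One writes the first-step decomposition of the optimal value as
\[
\Mtld_{S(x,y)}(x,y) \;=\; 1 + \frac{p}{d_x}\sum_{z\in N_{as}(x)}\Mtld_G(z,y) \;+\; \frac{1-p}{d_y}\sum_{z\in N_{as}(y)}\Mtld_G(x,z),
\]
assumes $\Mtld_{S(x,y)}(x,y) > 1 + \frac{1}{d_x}\sum_{z\in N_{as}(x)}\Mtld_G(z,y)$, substitutes, and cancels to get
\[
\frac{1}{d_y}\sum_{z\in N_{as}(y)}\Mtld_G(x,z) \;>\; \frac{1}{d_x}\sum_{z\in N_{as}(x)}\Mtld_G(z,y).
\]
But then the strategy that moves $y$ with probability $1$ at $(x,y)$ strictly beats $S(x,y)$, contradicting optimality of $S(x,y)$. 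This algebraic step---comparing the two pure branches directly rather than trying to bound the $y$-branch by $\Mtld_G(x,y)$---is the missing idea in your proposal.
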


\begin{lemma}
    \label{lma:optimal}
    For any $G$, there is an optimal and deterministic strategy $S^{\ast}$.
\end{lemma}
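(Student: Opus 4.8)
The plan is to establish existence of an optimal deterministic strategy by a two-stage argument: first reduce from arbitrary (randomized) strategies to deterministic ones via an averaging/extremal argument, and then handle the fact that the state space $V \cup \Sint$ is finite so that a single strategy can simultaneously be optimal for every initial configuration. I would begin by noting that $\Mtld_G(s,s')$ is well-defined (finite) for every pair $(s,s')$: this follows because, regardless of the adversary, the extended hitting times are finite, and one can crudely bound the meeting time by, say, $\tilde H(x,\overline{y}) + \tilde H(y,\overline{s_{tu}}) - \tilde H(s_{tu},\overline{y})$ using Proposition~\ref{fact:phigezero} together with the forthcoming Theorem~\ref{thrm:potential} — or, to avoid circularity, simply observe the meeting-time recursion has a finite solution because from any configuration there is a positive-probability path of bounded length to a meeting configuration (both agents can be driven to a common original vertex). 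This makes $M^{\ast} := \sup_S \Mtld_S$ a finite-valued function on $\mathcal{C}$.

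Next I would set up the optimality (Bellman) equation. For a configuration $c = (s,s')$ that is not a meeting configuration, any strategy satisfies
\begin{equation*}
    \Mtld_S(s,s') = 1 + S(c)\cdot \frac{1}{d_s}\sum_{z \in N_{as}(s)} \Mtld_S(z,s') + (1-S(c))\cdot \frac{1}{d_{s'}}\sum_{z' \in N_{as}(s')} \Mtld_S(s,z'),
\end{equation*}
and the adversary, wishing to maximize, will do at least as well by choosing $S(c)$ to put all its weight on whichever of the two conditional expectations (computed with respect to $\Mtld_G$, the optimal values) is larger; this is exactly the content already extracted in Lemma~\ref{lma:meeteq}. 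Formally, I would argue that $\Mtld_G$ satisfies the fixed-point equation
\begin{equation*}
    \Mtld_G(s,s') = 1 + \max\left\{ \frac{1}{d_s}\sum_{z \in N_{as}(s)} \Mtld_G(z,s'),\ \frac{1}{d_{s'}}\sum_{z' \in N_{as}(s')} \Mtld_G(s,z') \right\}
\end{equation*}
on non-meeting configurations (and equals $0$ on meeting configurations). The ``$\le$'' direction is a standard one-step conditioning bound valid for every strategy; the ``$\ge$'' direction is witnessed by the strategy $S(x,y)$ of Lemma~\ref{lma:meeteq}, whose identity shows the supremum attains the right-hand side for each fixed initial pair. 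Define $S^{\ast}$ to be the deterministic strategy that in every non-meeting configuration $c=(s,s')$ selects the agent achieving the maximum above (breaking ties, say, in favor of the first agent). This $S^{\ast}$ is a single, globally-defined deterministic strategy.

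It remains to verify $\Mtld_{S^{\ast}}(s,s') = \Mtld_G(s,s')$ for all $(s,s')$ simultaneously — this is the step I expect to be the main obstacle, since the per-initial-configuration optimizers from Lemma~\ref{lma:meeteq} are a priori different strategies and one must glue them. The clean way is a verification argument: the vector $\big(\Mtld_G(s,s')\big)_{(s,s')\in\mathcal{C}}$ and the vector $\big(\Mtld_{S^{\ast}}(s,s')\big)_{(s,s')}$ both satisfy the \emph{same} linear system — namely $v(c) = 0$ on meeting configurations and $v(c) = 1 + (\text{average over }S^{\ast}\text{'s chosen agent})$ on the rest — because $S^{\ast}$ was defined precisely to realize the maximizing branch of the fixed-point equation that $\Mtld_G$ obeys. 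Since this linear system (a discrete Dirichlet/absorbing-chain problem with boundary the meeting configurations, and every configuration reaching the boundary with positive probability under $S^{\ast}$) has a unique bounded solution, the two vectors coincide. Hence $S^{\ast}$ is optimal for every initial configuration at once and is deterministic, proving the lemma. The one technical point to nail down carefully is absorption under $S^{\ast}$ (so that uniqueness applies): from any configuration, following $S^{\ast}$ one can reach a meeting configuration with positive probability within a bounded number of rounds, because whichever agent $S^{\ast}$ keeps selecting performs an ordinary random walk on $\tilde G$ and will with positive probability walk onto the other agent's state (or its reverse), and if $S^{\ast}$ alternates, a symmetric argument applies; this guarantees the absorbing chain is such that absorption occurs almost surely, giving uniqueness of the bounded solution.
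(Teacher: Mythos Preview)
Your approach is essentially correct and takes a genuinely different route from the paper. The paper does not invoke a Bellman fixed-point equation or any uniqueness result for absorbing chains. Instead it constructs $S^{\ast}$ (from the per-pair optimizers $S(x,y)$, exactly as you suggest at one point) and then argues directly by contradiction: letting $\alpha=\max_{(x,y)}\bigl(\Mtld_G(x,y)-\Mtld_{S^{\ast}}(x,y)\bigr)$, it picks among the pairs attaining $\alpha$ one minimizing the auxiliary \emph{destination value} $d(x,y)$, and a single application of the averaging identity (Lemma~\ref{lma:meeteq} for $\Mtld_G$, the obvious recursion for $\Mtld_{S^{\ast}}$) yields a strict inequality because some neighbour $z'\in N_{as}(x)$ has $d(z',y)<d(x,y)$ and hence $\alpha(z',y)<\alpha$. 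Your linear-system/uniqueness argument is more conceptual and avoids introducing $d(\cdot,\cdot)$; the paper's argument is more self-contained in that it needs no outside Markov-chain facts, at the price of the bespoke potential $d$.

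Two places to tighten. First, the ``$\ge$'' half of your Bellman equation is not delivered by Lemma~\ref{lma:meeteq} alone: that lemma only shows $\Mtld_G(x,y)=1+\tfrac{1}{d_x}\sum_{z}\Mtld_G(z,y)$ for \emph{the} agent that $S(x,y)$ moves, not that this branch is the maximum of the two. The clean fix is to define $S^{\ast}$ exactly as the paper does (move whichever agent $S(x,y)$ moves with positive probability) rather than via the $\max$; then Lemma~\ref{lma:meeteq} says precisely that $\Mtld_G$ satisfies the $S^{\ast}$-linear system, and your uniqueness step goes through unchanged. Second, your absorption argument (``whichever agent $S^{\ast}$ keeps selecting performs an ordinary random walk \ldots\ and if $S^{\ast}$ alternates, a symmetric argument applies'') does not match how $S^{\ast}$ actually behaves, since its choice depends on the whole configuration and moves from intermediate states are forced. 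You already have the right tool: you argued $\Mtld_G<\infty$ at the outset, and $S^{\ast}$ is one particular strategy, so $\Mtld_{S^{\ast}}\le\Mtld_G<\infty$ everywhere; this immediately gives almost-sure absorption under $S^{\ast}$ and hence uniqueness of the bounded solution.
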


\begin{proof}
    
    Using $S(x,y)$, we define the strategy $S^{\ast}$ as follows: If the strategy $S(x,y)$ moves the agent at $x$ at initial configuration $(x,y)$ with the probability strictly greater than $0$, then the strategy $S^{\ast}$ moves $x$ at $(x,y)$. Otherwise if $S(x,y)$ moves $y$ with probability $1$, then $S^{\ast}$ moves $y$. The strategy $S^{\ast}$ is created by conducting the above operation for all pairs of states $(x, y)$. Obviously, the strategy is deterministic. We argue that such strategy $S^{\ast}$ is optimal.

    Towards the contradiction, suppose that $S^{\ast}$ is not optimal. Then there is at least one pair of states $x, y$ such that $\Mtld_G(x,y) - \Mtld_{S^{\ast}}(x,y) > 0$ holds. Let $\alpha$ be the maximum value of $\Mtld_G(x,y) - \Mtld_{S^{\ast}}(x,y)$ among such pairs. We choose a pair $x, y$ such that they attain $\alpha$ and have the minimum destination value $d(x,y)$ among the pairs that attain the value $\alpha$. It holds that $x \neq y$, since if $x = y$ then $\Mtld_G(x, x) = \Mtld_{S^{\ast}}(x, x) = 0$. Assume that in the strategy $S(x,y)$ the agent in the state $x$ is moved with positive probability $p > 0$. Therefore, $S^{\ast}$ moves the agent in the state $x$ with probability one. Hence, by averaging the moves of the agent among the neighbors of the state $x$, we have
    \begin{equation}
        \label{eqn:aves}
        \Mtld_{S^{\ast}}(x,y) = 1 + \frac{1}{d_x} \left( \sum_{z \in N_{as}(x)}\Mtld_{S^{\ast}}(z, y) \right).
    \end{equation} 
    
    Using Lemma \ref{lma:meeteq} and Equation (\ref{eqn:aves}), we can derive the following contradiction,
    \begin{eqnarray*}
        \Mtld_{S(x,y)}(x,y) &=& 1 + \frac{1}{d_x}\sum_{z \in N_{as}(x)}\Mtld_G(z, y) \hspace{100pt}\text{(by Lemma \ref{lma:meeteq})}\\
        &=& 1 + \frac{1}{d_x}\sum_{z \in N_{as}(x)} \left( \Mtld_{S^{\ast}}(z, y) + \alpha(z, y) \right) \\
        &<& 1 + \frac{1}{d_x}\sum_{z \in N_{as}(x)} \left( \Mtld_{S^{\ast}}(z, y) \right) + \alpha \\
        &=& \Mtld_{S^{\ast}}(x, y) + \alpha \hspace{145pt}\text{(by Equation \ref{eqn:aves})}\\
        &=& \Mtld_G(x,y) = \Mtld_{S(x,y)}(x,y),
    \end{eqnarray*}
    where $\alpha(z,y) = \Mtld_{G}(z,y) - \Mtld_{S(z,y)}(z,y)$. The strict inequality holds because of the $(x,y)$ choice. That is, there is at least one adjacent state $z' \in N_{as}(x)$ such that $d(z, y) < d(x, y)$ by the definition of $d$. Therefore, at such a configuration $(z',y)$ we have $\alpha(z', y) < \alpha$.

    The other case is that in the strategy $S(x, y)$, the adversary moves the agent in the state $y$ with probability $1$. In this case, we can directly have $\Mtld_{S(x,y)}(x,y) = 1 + \frac{1}{d_y} \sum_{z \in N_{as}(y)}\Mtld_G(x, z)$. Also, $\Mtld_{S^{\ast}}(x,y) = 1 + \frac{1}{d_y} \left( \sum_{z \in N_{as}(y)}\Mtld_{S^{\ast}}(x, z) \right)$ holds. Therefore, we can derive a contradiction similarly, proving the lemma. 
\end{proof}

For the extended hitting times, the following proposition holds.

\begin{proposition}
    \label{proposition:averagehit}
    For any pair of states $s_1, s_2 \in V \cup \Sint$ such that $s_1 \neq s_2$, we have 
    \begin{equation*}
        \tilde{H}(s_1, s_2) = 1 + \frac{1}{d_{s_1}} \sum_{z \in N_{as}(s_1)} \tilde{H}(z, s_2).
    \end{equation*}
\end{proposition}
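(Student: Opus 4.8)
The plan is to use a standard one-step (first-move) analysis of the random walk on $\Gtilde$ described in Section~\ref{sec:preliminaries}. Before the decomposition I would record that all quantities involved are well defined: the non-atomic moves induce a finite Markov chain on the state set $V \cup \Sint$, and this chain is irreducible (every intermediate state $s_{xy}$ is reached from $x$ by choosing the neighbour $y$, and every original state $x$ is reached from any incident intermediate state $s_{wx}$ in one move), so every extended hitting time $\tilde{H}(s,s')$ is finite, with the usual convention $\tilde{H}(s,s)=0$.

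Then I would condition on the first move of an agent starting in state $s_1$, distinguishing the two possible types of $s_1$. If $s_1=v\in V$ is original, the agent performs one move and lands in the intermediate state $s_{vw}$ with probability $1/d_v$ for each $w\in N_G(v)$; since $N_{as}(v)=\{\,s_{vw}:w\in N_G(v)\,\}$ and $d_{s_1}=d_v$, this is exactly the uniform distribution on $N_{as}(s_1)$ with the stated weights. If $s_1=s_{xy}\in\Sint$ is intermediate, then $d_{s_1}=1$ and $N_{as}(s_1)=\{y\}$, so the first move is deterministic: the agent performs one move and lands at $y$, and the claimed identity reduces to $\tilde{H}(s_{xy},s_2)=1+\tilde{H}(y,s_2)$, which is precisely the recursion already used before Lemma~\ref{lma:neighborhit}. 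Thus in both cases the dynamics is correctly described by ``one move, then a state $z\in N_{as}(s_1)$ chosen with probability $1/d_{s_1}$''.

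Finally, using the hypothesis $s_1\neq s_2$ so that the walk certainly makes at least one move, I would invoke the Markov property: conditioned on the agent occupying state $z\in N_{as}(s_1)$ after the first move, the expected number of further moves until it reaches $s_2$ equals $\tilde{H}(z,s_2)$ (equal to $0$ if $z=s_2$, consistent with the convention above). By linearity of expectation,
\[
\tilde{H}(s_1,s_2)=1+\sum_{z\in N_{as}(s_1)}\frac{1}{d_{s_1}}\,\tilde{H}(z,s_2),
\]
which is the proposition.

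I do not expect a genuine obstacle: the statement is essentially the defining harmonic recurrence for hitting times, and the proof is a one-step decomposition. The only points that need a little care are (i) checking that the non-standard, deterministic transition out of an intermediate state is captured correctly by the uniform-over-$N_{as}$ formulation — it is, since $d_{s_{xy}}=1$ — and (ii) keeping the bookkeeping consistent with the definition of $\tilde{H}$ as counting \emph{moves} in $\Gtilde$ (so traversing an original edge of $G$ costs two), rather than rounds or edges of $G$.
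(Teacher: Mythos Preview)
Your proposal is correct. The paper states this proposition without proof, treating it as an immediate consequence of the definition of the extended hitting time, and your one-step (first-move) decomposition is precisely the standard justification one would supply; in particular your case split between original and intermediate $s_1$ and the check that $d_{s_{xy}}=1$ makes the uniform-over-$N_{as}$ formulation consistent with the deterministic transition are exactly the points that need to be said.
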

Thus we have the following equations for the potential function $\tilde{\Phi}$, whose proof is deferred to Appendix \ref{apd:averagephi}.

\begin{restatable}{lemma}{averagephi}
    \label{lma:averagephi}
    For any pair of states $s_1, s_2$ such that $s_1 \neq s_2$, we have
    \begin{equation*}
        \tilde{\Phi}(s_1, s_2) = 1 + \frac{1}{d_{s_1}} \sum_{z \in N_{as}(s_1)} \tilde{\Phi}(z, s_2)
        = 1 + \frac{1}{d_{s_2}} \sum_{z \in N_{as}(s_2)} \tilde{\Phi}(z, s_1).
    \end{equation*}
\end{restatable}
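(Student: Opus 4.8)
The plan is to unfold the definition of $\tilde{\Phi}$ and reduce the statement to Proposition \ref{proposition:averagehit} about the extended hitting times. First I would observe that it suffices to prove only the first equality,
\[
\tilde{\Phi}(s_1, s_2) = 1 + \frac{1}{d_{s_1}} \sum_{z \in N_{as}(s_1)} \tilde{\Phi}(z, s_2),
\]
since the second equality then follows by applying this identity to the pair $(s_2, s_1)$ and invoking the symmetry $\tilde{\Phi}(s_1, s_2) = \tilde{\Phi}(s_2, s_1)$ established right after the definition of $\tilde{\Phi}$.

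For the first equality I would substitute the definition $\tilde{\Phi}(z, s_2) = \tilde{H}(z, \overline{s_2}) + \tilde{H}(s_2, \overline{s_{tu}}) - \tilde{H}(s_{tu}, \overline{s_2})$ into the right-hand side. The key observation is that only the first summand depends on the running index $z$, so the other two pass through the average unchanged:
\[
1 + \frac{1}{d_{s_1}} \sum_{z \in N_{as}(s_1)} \tilde{\Phi}(z, s_2) = \left( 1 + \frac{1}{d_{s_1}} \sum_{z \in N_{as}(s_1)} \tilde{H}(z, \overline{s_2}) \right) + \tilde{H}(s_2, \overline{s_{tu}}) - \tilde{H}(s_{tu}, \overline{s_2}).
\]
By Proposition \ref{proposition:averagehit}, applied with source state $s_1$ and target state $\overline{s_2}$, the parenthesized expression is exactly $\tilde{H}(s_1, \overline{s_2})$. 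Substituting back gives $\tilde{H}(s_1, \overline{s_2}) + \tilde{H}(s_2, \overline{s_{tu}}) - \tilde{H}(s_{tu}, \overline{s_2}) = \tilde{\Phi}(s_1, s_2)$, which is the claim.

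The one point that needs attention is the hypothesis of Proposition \ref{proposition:averagehit}, which requires the source and target to be distinct states; here this is the condition $s_1 \neq \overline{s_2}$ (and the symmetric reduction above uses $s_2 \neq \overline{s_1}$, the same condition). Since $s_1 = \overline{s_2}$ is precisely a meeting configuration, this is automatic at every configuration at which $\tilde{\Phi}$ is ever evaluated in the main argument, so the identity is to be read together with the non-meeting assumption on $(s_1, s_2)$. Beyond this bookkeeping the argument is essentially a one-line substitution, so I do not expect a real obstacle; the only mild subtlety is keeping the bar operation straight, since in each $\tilde{H}(\cdot,\cdot)$ term appearing here the first argument is unbarred while the second is barred, and Proposition \ref{proposition:averagehit} must be invoked with those exact roles.
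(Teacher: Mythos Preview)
Your proposal is correct and follows essentially the same approach as the paper's own proof: expand $\tilde{\Phi}$ via its definition, apply Proposition~\ref{proposition:averagehit} to the single $z$-dependent term $\tilde{H}(\cdot,\overline{s_2})$, and then invoke the symmetry $\tilde{\Phi}(s_1,s_2)=\tilde{\Phi}(s_2,s_1)$ for the second equality. Your explicit remark that the invocation of Proposition~\ref{proposition:averagehit} really needs $s_1\neq\overline{s_2}$ (a meeting configuration), rather than merely $s_1\neq s_2$, is a point the paper glosses over; as you note, this is harmless for the use in Theorem~\ref{thrm:potential}.
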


A similar proposition holds for the non-atomic meeting times.

\begin{proposition}
    \label{prop:averagemeet}
    For any pair of states $s_1, s_2$ such that $s_1 \neq s_2$ and an optimal and deterministic strategy $S^{\ast}$, suppose that the agent with state $s_1$ is moved in the configuration $(s_1, s_2)$ by the strategy. Then, we have
    \begin{equation*}
        \Mtld_G(s_1, s_2) = 1 + \frac{1}{d_{s_1}} \sum_{z \in N_{as}(s_1)} \Mtld_G(z, s_2).
    \end{equation*}
    Otherwise, if the strategy moves the agent with state $s_2$, then we have
    \begin{equation*}
        \Mtld_G(s_1, s_2) = 1 + \frac{1}{d_{s_2}} \sum_{z \in N_{as}(s_2)} \Mtld_G(z, s_1).
    \end{equation*}
\end{proposition}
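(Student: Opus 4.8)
The plan is to run a first-move (one-step) analysis of the meeting process driven by $S^{\ast}$ and then to convert the resulting recursion for $\Mtld_{S^{\ast}}$ into one for $\Mtld_G$ using the optimality of $S^{\ast}$. Two facts drive the argument. First, by Lemma~\ref{lma:optimal} the strategy $S^{\ast}$ is optimal, so $\Mtld_{S^{\ast}}(s,s') = \Mtld_G(s,s')$ for every pair $s,s' \in V\cup\Sint$. Second, a strategy is by definition a function $\mathcal{C}\to[0,1]$ of the current configuration only, so the induced process is memoryless: conditioned on the process being in configuration $(z,s_2)$ at some round, the expected number of remaining rounds until the agents meet is exactly $\Mtld_{S^{\ast}}(z,s_2)$.

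With these in hand I would argue as follows. Since $S^{\ast}$ moves the agent in state $s_1$ with probability one at $(s_1,s_2)$, one round elapses and, by the definition of a non-atomic move in Section~\ref{sec:preliminaries}, that agent ends the round in state $z$ with probability $1/d_{s_1}$ for each $z\in N_{as}(s_1)$ (if $s_1$ is intermediate then $N_{as}(s_1)$ is its single forced successor and $d_{s_1}=1$, so the formula still reads correctly). If some such $z$ makes $(z,s_2)$ a meeting configuration, that term simply contributes $\Mtld_G(z,s_2)=0$, which is exactly what the recursion wants. Applying the law of total expectation together with the memorylessness fact gives
\begin{equation*}
    \Mtld_{S^{\ast}}(s_1,s_2) = 1 + \frac{1}{d_{s_1}}\sum_{z\in N_{as}(s_1)} \Mtld_{S^{\ast}}(z,s_2),
\end{equation*}
and substituting $\Mtld_{S^{\ast}} = \Mtld_G$ on both sides yields the first displayed identity. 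The case where $S^{\ast}$ moves the agent in state $s_2$ is obtained by swapping the roles of $s_1$ and $s_2$ throughout.

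I do not expect a real obstacle: the statement is the counterpart of Lemma~\ref{lma:meeteq} with the pair-dependent maximizer $S(x,y)$ replaced by the single global strategy $S^{\ast}$ and with $x,y$ generalized to arbitrary states, and Lemma~\ref{lma:optimal} is precisely what lets one replace $\Mtld_{S^{\ast}}$ by $\Mtld_G$ on the right-hand side. The only points needing care are the bookkeeping for neighbors $z$ at which a meeting occurs (handled by the $\Mtld_G(z,s_2)=0$ convention) and spelling out that ``first-move analysis $+$ memorylessness of a configuration-indexed strategy'' is what produces the $1+\frac{1}{d_{s_1}}\sum_z$ shape. Alternatively one can phrase it as: $\Mtld_G(s_1,s_2)\ge 1+\frac{1}{d_{s_1}}\sum_z \Mtld_G(z,s_2)$, because the adversary may move the agent in state $s_1$ first and then continue optimally, while equality holds because $S^{\ast}$ does exactly that and is optimal from each $(z,s_2)$ by Lemma~\ref{lma:optimal}; this is the same computation already carried out inside the proof of Lemma~\ref{lma:optimal}.
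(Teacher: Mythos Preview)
Your proposal is correct and matches the paper's reasoning. The paper states Proposition~\ref{prop:averagemeet} without an explicit proof, treating it as an immediate consequence of the one-step analysis already carried out inside the proof of Lemma~\ref{lma:optimal} (specifically Equation~(\ref{eqn:aves})) together with the optimality conclusion $\Mtld_{S^{\ast}}=\Mtld_G$; your write-up spells out exactly this argument.
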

Note that these two cases exclusively hold, that is, both equations do not hold at a time. This fact holds since in the case that the adversary moves the agent $x$, we cannot take the average among the adjacent states of $y$.

Finally, we can claim the main argument.

\begin{theorem}
    \label{thrm:potential}
    Let $G$ be any connected and undirected graph, and let $s_{tu}$ be a hidden state of non-atomic random walks of $\tilde{G}$. Then, for every pair of states $x, y$, we have
    \begin{equation*}
        \Mtld_G(x,y) \le \tilde{\Phi}(x, y),
    \end{equation*}
    where
    \begin{equation*}
        \tilde{\Phi}(x,y) = \tilde{H}(x, \overline{y}) + \tilde{H}(y, \overline{s_{tu}}) - \tilde{H}(s_{tu}, \overline{y}).
    \end{equation*}
\end{theorem}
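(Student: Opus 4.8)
The plan is to port the potential-comparison argument of Coppersmith, Tetali and Winkler~\cite{collisions} to the subdivided setting, feeding it the ingredients already in place: the optimal deterministic strategy $S^{\ast}$ from Lemma~\ref{lma:optimal}, the one-step averaging identity for $\tilde{\Phi}$ (Lemma~\ref{lma:averagephi}, which importantly holds for \emph{either} of the two agents), the one-step averaging identity for $\tilde{M}_G$ under $S^{\ast}$ (Proposition~\ref{prop:averagemeet}), and the nonnegativity $\tilde{\Phi}\ge 0$ (Proposition~\ref{fact:phigezero}). Concretely, I would set $f(x,y)=\tilde{\Phi}(x,y)-\tilde{M}_G(x,y)$ for every configuration $(x,y)\in(V\cup\Sint)^2$ and let $\mu=\min_{(x,y)}f(x,y)$; the theorem is equivalent to $\mu\ge 0$. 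If the minimum is attained at some meeting configuration $(s,s')$ (i.e. $s=s'$ or $s=\overline{s'}$), then $\tilde{M}_G(s,s')=0$ and $\mu=\tilde{\Phi}(s,s')\ge 0$ by Proposition~\ref{fact:phigezero}, and we are done; so assume for contradiction that $\mu<0$, and fix a non-meeting configuration $(x_0,y_0)$ with $f(x_0,y_0)=\mu$.

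The crux is to show that the set $\mathcal{L}=\{(x,y):f(x,y)=\mu\}$ of minimizers is invariant under the moves dictated by $S^{\ast}$. Take any non-meeting $(x,y)\in\mathcal{L}$ and suppose $S^{\ast}$ moves the agent in state $x$ (the case where it moves the agent in state $y$ is symmetric). Lemma~\ref{lma:averagephi}, applied to the first coordinate, gives $\tilde{\Phi}(x,y)=1+\frac1{d_x}\sum_{z\in N_{as}(x)}\tilde{\Phi}(z,y)$, and Proposition~\ref{prop:averagemeet} gives the very same identity with $\tilde{M}_G$ in place of $\tilde{\Phi}$; subtracting, the constant cancels and $f(x,y)=\frac1{d_x}\sum_{z\in N_{as}(x)}f(z,y)$. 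Since $f(z,y)\ge\mu$ for all $z$ while the average equals $\mu$, every term satisfies $f(z,y)=\mu$, i.e. all one-step $S^{\ast}$-successors of $(x,y)$ again lie in $\mathcal{L}$. Iterating this (a meeting configuration has no outgoing move, so the induction is clean), every configuration reachable from $(x_0,y_0)$ along some run of $S^{\ast}$ belongs to $\mathcal{L}$.

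To close the argument I would invoke optimality of $S^{\ast}$: $\tilde{M}_{S^{\ast}}(x_0,y_0)=\tilde{M}_G(x_0,y_0)$ is finite, hence starting from $(x_0,y_0)$ the process under $S^{\ast}$ reaches a meeting configuration with probability one; in particular some meeting configuration $(s,s')$ is $S^{\ast}$-reachable from $(x_0,y_0)$ and therefore lies in $\mathcal{L}$. But then $\mu=f(s,s')=\tilde{\Phi}(s,s')-0\ge 0$ by Proposition~\ref{fact:phigezero}, contradicting $\mu<0$. Hence $\mu\ge 0$, that is, $\tilde{M}_G(x,y)\le\tilde{\Phi}(x,y)$ for every pair of states $x,y$, which is the assertion of the theorem.

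The step I expect to be the main obstacle is the last one, because it presumes that $\tilde{M}_G$ is finite (so that $\mu$ is well-defined and a meeting configuration is genuinely $S^{\ast}$-reachable from a minimizer). If finiteness of $\tilde{M}_G$ is not already in hand, I would instead avoid the minimum-value scheme altogether and argue by induction on truncation: put $\tilde{M}_G^{(k)}(x,y)=\sup_{S}\mathbb{E}_S[\min(T_{\mathrm{meet}},k)]$, so that $\tilde{M}_G^{(0)}\equiv 0$ and, for non-meeting $(x,y)$, $\tilde{M}_G^{(k)}(x,y)=1+\max\{\frac1{d_x}\sum_{z\in N_{as}(x)}\tilde{M}_G^{(k-1)}(z,y),\ \frac1{d_y}\sum_{z\in N_{as}(y)}\tilde{M}_G^{(k-1)}(x,z)\}$; then $\tilde{M}_G^{(k)}\le\tilde{\Phi}$ follows by induction on $k$, the base case being Proposition~\ref{fact:phigezero} and the inductive step using Lemma~\ref{lma:averagephi} for whichever coordinate attains the maximum. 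Taking $k\to\infty$, monotone convergence gives $\mathbb{E}_S[T_{\mathrm{meet}}]\le\tilde{\Phi}(x,y)$ for every strategy $S$, and the supremum over $S$ yields $\tilde{M}_G(x,y)\le\tilde{\Phi}(x,y)$ (and, as a byproduct, finiteness of $\tilde{M}_G$).
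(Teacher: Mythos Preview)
Your argument is correct and close in spirit to the paper's, but the mechanism for extracting the contradiction is genuinely different. The paper also looks at the extremum of $\tilde{M}_G-\tilde{\Phi}$ and uses the same one-step averaging identities, but instead of propagating the minimizer set it breaks ties among the extremal configurations via the \emph{destination value} $d(s_1,s_2)=dist(s_1,g(s_1))+dist(g(s_1),g(s_2))+dist(s_2,g(s_2))$: among all configurations achieving the maximum discrepancy $\beta_{\max}$, pick one with smallest $d$, and then argue that at least one $S^{\ast}$-successor has strictly smaller $d$ (hence strictly smaller discrepancy), contradicting that the average of the successor discrepancies equals $\beta_{\max}$. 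Your route avoids this auxiliary potential entirely: you show that the minimizer set $\mathcal{L}$ is forward-invariant under $S^{\ast}$ and then use finiteness of $\tilde{M}_G$ to force a meeting configuration into $\mathcal{L}$.

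What each buys: the paper's tie-breaking device gives an immediate local contradiction in one step, at the cost of introducing and analysing $d(\cdot,\cdot)$. Your propagation argument is cleaner structurally and needs no extra combinatorial gadget, but it leans on finiteness of $\tilde{M}_G$, which you correctly flag; both proofs in fact presuppose finiteness (the paper's $\beta_{\max}$ is only well defined if $\tilde{M}_G<\infty$). Your truncation fallback is therefore the most robust of the three: the induction $\tilde{M}_G^{(k)}\le\tilde{\Phi}$ via Lemma~\ref{lma:averagephi} and Proposition~\ref{fact:phigezero}, followed by monotone convergence, is a self-contained proof that simultaneously yields the bound and establishes finiteness, without invoking $S^{\ast}$ or any tie-breaker at all.
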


\begin{proof}
    To prove the theorem by contradiction, assume that there is a pair of states $x, y$ such that $\Mtld_G(x, y) - \tilde{\Phi}(x, y) > 0$. Let $\beta_{max}$ be the maximum value of such differences. Let $\beta(x, y)$ be the difference $\Mtld_G(x, y) - \tilde{\Phi}(x, y)$ at the configuration $(x, y)$. We choose a configuration $(x, y)$ such that it obtains minimum $d(x, y)$ among the configurations that achieve $\beta_{max}$. Without loss of generality, suppose that the strategy $S^{\ast}$ moves the agent in the state $x$. Since $\Mtld_G(x, x) = 0$ and $\tilde{\Phi}(x, x) \ge 0$ by Proposition \ref{fact:phigezero}, it holds that $x \neq y$. Using the average argument, we have the following contradiction:
    \begin{eqnarray*}
        \Mtld_G(x,y) &=& \tilde{\Phi}(x, y) + \beta_{max} \\
            &=& 1 + \frac{1}{d_{x}} \sum_{z \in N_{as}(x)} \tilde{\Phi}(z, y) + \beta_{max} \\
            &>& 1 + \frac{1}{d_{x}} \sum_{z \in N_{as}(x)} \left( \tilde{\Phi}(z, y) + \beta(z, y) \right) \\
            &=& 1 + \frac{1}{d_{x}} \sum_{z \in N_{as}(x)} \Mtld_G(z, y)
            = \Mtld_G (x, y).
    \end{eqnarray*}
    The second equality uses Lemma \ref{lma:averagephi}, and the last equality uses Proposition \ref{prop:averagemeet}. The inequality is strict, since there is an adjacent state $z' \in N_{as}(x)$ such that $d(z',y) < d(x,y)$, therefore we have $\beta(z', y) < \beta_{max}$.

\end{proof}

As a corollary of Theorem \ref{thrm:potential}, we can prove Theorem \ref{thrm:upperbound}, however, whose proof is deferred to Appendix \ref{apd:upperbound} due to the space constraint.
\section{Discussion}
\label{sec:discussion}
In this section, we examine the upper bounds for several graph classes, namely graphs with bounded degrees including lines and rings, and complete graphs. We also consider the general graphs. Let $\Delta$ be the maximum degree of the vertices in $V$, and $H_G = \max_{x, y \in V} H(x, y)$.

\begin{itemize}
    \item In general graphs, we have the following upper bound. Since in the last summation of the upper bound of Theorem \ref{thrm:upperbound}, the number of sums is upper bounded by the max degree of the initial positions $x, y$, we have the general upper bound of $O(\Delta H_G)$. Also, $H_G = O(n^3)$ holds for any graph $G$, which is shown by the paper \cite{graham1990maximum}, we have $O(\Delta n^3) = O(n^4)$. For the meeting time of atomic random walks, in the paper \cite{tetali1991random} the authors show that $M_G(x, y) = O(n^3)$ for any graph $G$ and any initial positions $x, y$.
    
    \item For any graph $G$ with bounded degrees for any initial position $x, y$, the meeting time $\Mtld (x, y)$ is bounded by $O(H_G)$. Especially in the lines and rings, since the hitting time $H(z, u)$ for $z \in N_G(u)$ is $O(n)$, we have $\tilde{M}_G(x, y) \le 4H_G + O(n)$. 
    \item The complete graphs are an instance in which the meeting times are different between atomic and non-atomic random walks. In the original (atomic) random walks, the meeting times for the complete graph with $n$ vertices is $\Theta(n)$, while the one for the non-atomic random walk is $\Theta(n^2)$. For the original random walks, the $O(n)$-upper bound is derived by the original upper bound in \cite{collisions}. Also, $\Omega(n)$-lower bound is given by the following strategy: for the agents starting at different vertices, the strategy repeatedly chooses the same agent until they meet. Obviously, the expected time to meet is $\Theta(n)$ with the adversary using the strategy. Since there is a strategy that obtains $\Theta(n)$ time to meet, the meeting time of the worst strategy is at least $\Omega(n)$.
Similarly, in the non-atomic random walks, an $O(n^2)$ upper bound of the meeting time is given by calculating the potential function $\tilde{\Phi}$. Since any intermediate state is hidden by the symmetry of the topology, we can choose any intermediate state as a hidden state, suppose $s_h$. For any pair of intermediate states $s_{ab}, s_{cd}$, we have $\tilde{H}(s_{ab}, s_{cd}) = 1 + \tilde{H}(b,c) + \tilde{H}(c, s_{cd}) = O(n + \tilde{H}(c, s_{cd})) = O(n^2)$. Also, the upper bound is derived by the following strategy: we let the agents start at $s_{ab}$ and $s_h$, and the strategy repeatedly moves the agent starting at $s_{ab}$ until they meet. This takes expected $\Theta(n^2)$ time to meet, and as a consequence, the meeting time of the worst strategy is at least $\Omega(n^2)$.
\end{itemize}

\section{Conclusion}
\label{sec:conclusion}
In this paper, we revisit the adversarial meeting time of the random walks by two agents and consider the non-atomic version of the random walks. For the extended version of the random walks, we give a new upper bound of the worst-case expected time to meet in a given graph $G$. 
We also show for atomic and non-atomic random walks, the meeting times are different in the complete graphs. 

\bibliographystyle{plain}
\bibliography{arxiv}

\appendix

\section{Omitted Proofs}

\subsection{Proof of Lemma \ref{lma:meeteq}}
\label{apd:meeteq}
\meeteq*

\begin{proof}
    In the strategy $S(x, y)$, the adversary moves the agent in the state $x$ with probability $p$ or the one in the state $y$ with probability $1-p$ for $p > 0$. Since $\Mtld_{S(x, y)}(x, y) = \Mtld_G(x, y)$ holds, we have
    \begin{equation}
        \label{eqn:avesxy}
        \Mtld_{S(x,y)}(x,y) = 1 + \frac{p}{d_x} \left( \sum_{z \in N_{as}(x)}\Mtld_G(z, y) \right) + \frac{1 - p}{d_y} \left( \sum_{z \in N_{as}(y)}\Mtld_G(z, x) \right).
    \end{equation} 
    For the proof, we first show that $\Mtld_{S(x,y)}(x,y) \ge 1 + \frac{1}{d_x} \sum_{z \in N_{as}(x)}\Mtld_G(z, y)$. 
    Since $S^{\ast}$ moves the agent with the state $x$ with probability $1$, we have  $\Mtld_{S^{\ast}}(x,y) = 1 + (1/d_x)\sum_{z \in N_{as}(x)}\Mtld_G(z, y)$. Also, since $S(x, y)$ is the worst strategy when the initial position is $x, y$, we have $\Mtld_{S^{\ast}}(x, y) \le \Mtld_{S(x, y)}(x, y)$. Therefore, $\Mtld_{S(x,y)}(x,y) \ge 1 + \frac{1}{d_x} \sum_{z \in N_{as}(x)}\Mtld_G(z, y)$ holds.

    Then, we show that $\Mtld_{S(x,y)}(x,y) \le 1 + \frac{1}{d_x} \sum_{z \in N_{as}(x)}\Mtld_G(z, y)$. To prove it by contradiction, suppose that
    $
        \Mtld_{S(x,y)}(x,y) > 1 + \frac{1}{d_x} \sum_{z \in N_{as}(x)}\Mtld_G(z, y).
    $
    Then by applying the equation (\ref{eqn:avesxy}) to the assumption, we have
    \begin{eqnarray*}
        \Mtld_{S(x,y)}(x,y) &>& 1 + \frac{1}{d_x}\sum_{z \in N_{as}(x)}\Mtld_G(z, y) \\
        \frac{p}{d_x} \left( \sum_{z \in N_{as}(x)}\Mtld_G(z, y) \right) &+& \frac{1 - p}{d_y} \left( \sum_{z \in N_{as}(y)}\Mtld_G(z, x) \right) > \frac{1}{d_x}\sum_{z \in N_{as}(x)}\Mtld_G(z, y) \\
        \frac{1}{d_y}\sum_{z \in N_{as}(y)}\Mtld_G(z, x) &>& \frac{1}{d_x}\sum_{z \in N_{as}(x)}\Mtld_G(z, y).
    \end{eqnarray*}
    The last inequality contradicts the fact that $\Mtld_G(x,y)$ is the worst expected time, since by moving the agent with the state $y$ with probability $1$, we have a superior strategy than $\Mtld_G(x,y)$. Therefore we have that 
    $
        \Mtld_{S(x,y)}(x,y) \le 1 + \frac{1}{d_x} \sum_{z \in N_{as}(x)}\Mtld_G(z, y).
    $  
    Using the inequalities, the lemma holds. 
\end{proof}

\subsection{Proof of Lemma \ref{lma:averagephi}}
\label{apd:averagephi}
\averagephi*

\begin{proof}
    By the definition of $\tilde{\Phi}$ and proposition \ref{proposition:averagehit}, we have
    \begin{eqnarray*}
        \tilde{\Phi}(s_1, s_2) &=& \tilde{H}(s_1, \overline{s_2}) + \tilde{H}(s_2, \overline{s_{tu}}) - \tilde{H}(s_{tu}, \overline{s_2}) \\
        &=& 1 + \frac{1}{d_{s_1}} \left( \sum_{z \in N_{as}(s_1)} \tilde{H}(z, s_2) \right) + \tilde{H}(s_2, \overline{s_{tu}}) - \tilde{H}(s_{tu}, \overline{s_2}) \\
        &=& 1 + \frac{1}{d_{s_1}} \left( \sum_{z \in N_{as}(s_1)} \left( \tilde{H}(z, s_2) + \tilde{H}(s_2, \overline{s_{tu}}) - \tilde{H}(s_{tu}, \overline{s_2}) \right) \right) \\
        &=& 1 + \frac{1}{d_{s_1}} \sum_{z \in N_{as}(s_1)} \tilde{\Phi}(z, s_2).
    \end{eqnarray*}
    By the similar argument, $\tilde{\Phi}(s_2, s_1) = 1 + \frac{1}{d_{s_2}} \sum_{z \in N_{as}(s_2)} \tilde{\Phi}(z, s_1)$ holds. Therefore, by the equation $\tilde{\Phi}(s_1, s_2) = \tilde{\Phi}(s_2, s_1)$, the lemma holds. 
\end{proof}

\subsection{Proof of Theorem \ref{thrm:upperbound}}
\label{apd:upperbound}
\upperbound*

\begin{proof}[of Theorem \ref{thrm:upperbound}]
    For $x, y \in V$ and $s_{tu} \in \Sint$, we can write the potential function as
    \begin{eqnarray*}
        \tilde{\Phi}(x, y) &=& \tilde{H}(x, \overline{y}) + \tilde{H}(y, \overline{s_{tu}}) - \tilde{H}(s_{tu}, \overline{y}) \\
            &=& \tilde{H}(x, y) + \tilde{H}(y, s_{ut}) - \tilde{H}(s_{tu}, y)\\
            &=& \tilde{H}(x, y) + \tilde{H}(y, u) + \tilde{H}(u, s_{ut}) - \tilde{H}(u, y) - 1 \\
            &=& \tilde{H}(x, y) + \tilde{H}(y, u) - \tilde{H}(u, y) + 2(d_u - 1) + \sum_{z \in N_G(u) \setminus \{t\}} \tilde{H}(z, u) \\
            &=& 2 \left( H(x, y) + H(y, u) - H(u, y) + d_u - 1 + \sum_{z \in N_G(u) \setminus \{t\}} H(z, u) \right).
    \end{eqnarray*}
    In the fourth equality, we use Lemma \ref{lma:neighborhit}, and in the last equality we use the fact that $\tilde{H}(x, y) = 2 H(x, y)$ for $x, y \in V$. 
\end{proof}

\end{document}